\documentclass[runningheads,orivec]{llncs}
\usepackage{amsmath,amssymb}
\usepackage{graphicx}
\usepackage{booktabs}
\usepackage{tikz}
\usepackage{tikz-cd}
\usepackage{multicol}
\usepackage{enumitem}
\usepackage{xcolor}
\usepackage{comment}
\usepackage[para]{footmisc}
\usepackage{hyperref}
\makeatletter
\renewcommand{\@makefntext}[1]{%
  \makebox[.5\columnwidth][l]{%
    \makebox[1.5em][r]{\@makefnmark}\hspace{.5em}#1}%
  \hskip0pt\relax}
\makeatother
\usetikzlibrary{fit,positioning}
\tikzset{
  vertexdot/.style={circle,fill=black,draw=black,minimum size=5pt,
    inner sep=0pt},
  vertexlabel/.style={font=\small,inner sep=1pt},
  pidvalue/.style={font=\small,color=red!80!black,inner sep=1pt},
  weightlabel/.style={font=\small,above=1mm}
}
\spnewtheorem{observation}{Observation}{\bfseries}{\itshape}

\makeatletter
\newenvironment{firstdeferredlemma}{%
  \begingroup\def\@thmcounterend{ \textnormal{(*)}\textsuperscript{1}.}%
  \begin{lemma}}{%
  \end{lemma}\endgroup}
\newenvironment{deferredtheorem}{%
  \begingroup\def\@thmcounterend{ \textnormal{(*)}.}\begin{theorem}}{%
  \end{theorem}\endgroup}
\newenvironment{deferredlemma}{%
  \begingroup\def\@thmcounterend{ \textnormal{(*)}.}\begin{lemma}}{%
  \end{lemma}\endgroup}
\newenvironment{deferredproposition}{%
  \begingroup\def\@thmcounterend{ \textnormal{(*)}.}\begin{proposition}}{%
  \end{proposition}\endgroup}

\makeatother
\newcommand{\deferredproofnote}{%
  \begingroup
  \begin{NoHyper}%
  \footnotetext[1]{Due to space constraints, results marked with $\star$ have their proofs deferred to
  the Appendix.}%
  \end{NoHyper}%
  \endgroup}

\title{Perfect Italian Domination on and Near Split Graphs:
Algorithms, Hardness, and Approximation}
\titlerunning{Perfect Italian Domination on and Near Split Graphs}

\author{Anand Babu N B\thanks{\mbox{ORCID: 0000-0002-9904-0488}} \and
Ashwin Jacob\thanks{\mbox{ORCID: 0000-0003-4864-043X}} \and
M. S. Manjusha\thanks{\mbox{ORCID: 0009-0001-2509-3667}} \and
P. Renjith\thanks{\mbox{ORCID: 0000-0002-4934-4571}}}
\authorrunning{A. Babu et al.}

\institute{National Institute of Technology Calicut, Kozhikode, Kerala, India\\
\email{anandbabu@nitc.ac.in, ashwinjacob@nitc.ac.in,\\
manjusha\_p220226cs@nitc.ac.in, renjith@nitc.ac.in}}

\begin{document}
\maketitle

\begin{abstract}
A perfect Italian dominating function assigns a label from $\{0,1,2\}$ to
each vertex so that the labels in the neighborhood of every zero-labeled
vertex sum to exactly two. Although the associated decision problem (PID) is NP-complete on chordal graphs, the complexity status on split graphs remains open. We give an $O(n^4)$ time algorithm for split graphs on $n$ vertices. We note that this is in contrast to Roman, Italian, and perfect Roman domination, which are NP-complete on split graphs. We further show that this tractability boundary is tight: PID is NP-hard on graphs at deletion distance one from split graphs. We also provide dichotomy results for PID on
restricted extensions of split graphs formed by attaching pendant stars, and establish inapproximability results for the same.
\keywords{Split graphs \and Perfect Italian domination \and Minimum-capacity cut
\and Split-deletion distance
\and Approximation hardness }
\end{abstract}

\section{Introduction}\label{sec:intro}
Roman domination originated from a strategy for defending the Roman Empire \cite{roman} and was formalized in graph-theoretic terms by Cockayne et al.~\cite{roman_domination_graphs}. Given a graph $G=(V, E)$, a Roman dominating function $f:V \rightarrow
\{0, 1, 2\}$ is a labeling of vertices such that each vertex $u \in V$ with $f(u) = 0$ has a vertex $v \in N(u)$ with $f(v) = 2$. 

Several variants of Roman domination arise by modifying the neighborhood requirement. In a \emph{perfect
Roman dominating function}, every vertex $u$ with $f(u)=0$ has exactly one
neighbor $v$ with $f(v)=2$ \cite{HENNING2018235}. In an \emph{Italian dominating
function}, introduced as Roman $\{2\}$-domination by Chellali et
al.~\cite{italian_domination}, every vertex $u$ with $f(u)=0$ instead satisfies
$\sum_{v\in N(u)}f(v)\geq2$. Italian domination
has been studied on numerous graph classes; a recent overview is given in
\cite{almulhim2024survey}.
A function
$f:V(G)\rightarrow\{0,1,2\}$ is a \emph{perfect Italian dominating function}
(PID-function) if
$\sum_{v\in N_G(u)}f(v)=2$ for every vertex $u\in V(G)$ with $f(u)=0$.
The weight of $f$ is $\omega(f)=\sum_{v\in V(G)}f(v)$, and the minimum weight
of a PID-function on $G$ is the perfect Italian domination number
$\gamma_I^P(G)$. The corresponding decision problem, denoted by PID, asks
whether $G$ admits a PID-function of weight at most a given integer $k$.

Since its introduction by Haynes and Henning
\cite{perfect_italian_domination}, perfect Italian domination has been studied
on several structured graph families \cite{varghese2021italian,lauri2020perfect},
including cographs \cite{PID_Cograph} and their generalizations
\cite{PID_DCCP45}. Its algorithmic complexity varies sharply across graph
classes: PID is NP-complete on chordal graphs, whereas polynomial-time algorithms
are known for block graphs, series--parallel graphs, distance-hereditary
graphs, threshold graphs, and cographs
\cite{PID_CSPB,PID_DCCP45,lauri2020perfect,PID_Cograph}. This contrast
motivates the search for the precise structural boundary between tractability
and NP-completeness.

Split graphs are a natural next graph class for refining this complexity boundary.
They form a structurally restricted subclass of chordal graphs such that their vertices
can be partitioned into a clique and an independent set, while the adjacencies
between the two parts remain arbitrary. Thus, they retain enough flexibility
to support nontrivial instances while offering sufficient structure for
polynomial-time solvability to remain plausible. Determining the
complexity of PID on split graphs helps locate the boundary more precisely.
Lauri and Mitillos~\cite{lauri2020perfect} showed that the general upper bound
$\gamma_I^P(G)\leq |V(G)|$ is tight for connected split graphs by exhibiting
an infinite family attaining equality. Their extremal construction, however,
does not yield an algorithm for computing an optimum PID-function on an
arbitrary split graph. Moreover, Roman domination,
Italian domination, and perfect Roman domination are all NP-complete on split
graphs \cite{roman_domination_graphs,ID_SB,PRD_S}. These results raise the question of whether PID is equally hard on split graphs. We resolve this question by providing a polynomial time algorithm.

Having established polynomial-time solvability on split graphs, we next ask
whether this tractability persists under controlled structural extensions.
We consider $K_{1,1}$-split graphs, which allow each independent-side center
to have at most one private pendant leaf. More generally, a
$K_{1,r}$-split graph consists of a clique $K$ together with a collection of
vertex-disjoint stars outside $K$. The star centers form a set $I$ and may
have neighbors in $K$, whereas each leaf is adjacent only to its own center. The case $r=0$
recovers split graphs. We determine the complexity of PID on $K_{1,1}$-split graphs and complete the
classification across the strict and non-strict $K_{1,r}$-split graph classes.

The contributions of this paper are as follows.
\begin{enumerate}
    \item We prove that PID is polynomial-time solvable on split graphs. After
    characterizing the possible labelings of the clique, we reduce the
    remaining case to a weighted partition problem and then to a
    minimum-capacity cut problem. The resulting algorithm computes a minimum
    PID-function in $O(n^4)$ time.

    \item We prove that PID is NP-complete even on graphs at deletion distance
    one from split graphs.
    Thus PID parameterized by split-deletion distance is para-NP-hard unless $\mathrm{P}=\mathrm{NP}$.

    \item We prove that PID is NP-complete on strict $K_{1,1}$-split graphs by a reduction from Restricted Exact Cover by 3-Sets. Thus, adding one
    pendant neighbor to every independent-side vertex changes the complexity
    from polynomial time to NP-complete. Since every non-strict
    $K_{1,r}$-split class with $r\geq1$ contains the strict $K_{1,1}$-split
    graphs, PID is NP-complete on every such class for fixed $r\geq1$.

    \item We give a linear-time algorithm for strict $K_{1,r}$-split
    graphs for every fixed $r\geq3$. Together with our hardness result for
    $r=1$ and the known hardness for $r=2$ \cite{PID_CSPB}, this yields a
    complete complexity classification of the strict $K_{1,r}$-split graph
    classes for all fixed $r\geq1$.

    \item We establish approximation bounds for both extensions of split
    graphs. Even at deletion distance exactly one, we rule out a polynomial-time
    $(2-\varepsilon)$-approximation for every fixed $0<\varepsilon<1$,
    unless $\mathrm{P}=\mathrm{NP}$.
    On strict $K_{1,1}$-split graphs, no PTAS exists unless $\mathrm{P}=\mathrm{NP}$, even when $|K|\leq|I|$. On the positive side, we obtain
    constant-factor approximation algorithms for strict $K_{1,r}$-split
    graphs with $r\in\{1,2\}$ and $|K|\leq c|I|$, where $c$ is fixed.
\end{enumerate}

The remainder of the paper is organized as follows. Section~2 introduces the
notation and graph classes used throughout the paper. Section~3 presents the
polynomial-time algorithm for split graphs. Section~4 establishes hardness at split-deletion
distance one. Section~5 proves NP-completeness on strict $K_{1,1}$-split
graphs and presents the linear-time algorithm for strict $r\geq3$. Section~6 presents
the approximation and inapproximability results for both extensions of split
graphs.

\section{Graph Preliminaries}
All graphs considered in this paper are finite, simple, and connected unless
stated otherwise. Let $G=(V(G),E(G))$ be a graph. For $v\in V(G)$, its open
and closed neighborhoods are
$N_G(v)=\{u\in V(G):uv\in E(G)\}$ and $N_G[v]=N_G(v)\cup\{v\}$.
For $S\subseteq V(G)$, let
$N_G(S)=\bigcup_{v\in S}N_G(v)$ and $N_G[S]=N_G(S)\cup S$. The subgraph
induced by $S$ is denoted by $G[S]$. We omit the subscript $G$ whenever the
underlying graph is clear.

A graph is \emph{chordal} if it has no induced cycle of length at least four.
A graph $G$ is a \emph{split graph} if its vertex set admits a partition
$V(G)=K\mathbin{\dot\cup}I$ such that $K$ induces a clique and $I$ induces an
independent set. We call $(K,I)$ a split partition and take $K$ to be maximal
unless stated otherwise. For $v\in K$, define
$N_G^I(v)=N_G(v)\cap I$ and $d_G^I(v)=|N_G^I(v)|$.
For $S\subseteq K$, let $N_G^I(S)=N_G(S)\cap I$ and
$d_G^I(S)=|N_G^I(S)|$.

For an integer $r\geq0$, a \emph{$K_{1,r}$-split graph} is a graph
$G=(K\cup I\cup J,E)$ in which $K$ induces a clique and $G[I\cup J]$ is a
vertex-disjoint collection of stars $K_{1,p}$ with $0\leq p\leq r$. The
set $I$ contains one designated center from each star, and $J$ contains all
its leaves. Each leaf is adjacent only to its own center. Only the centers
have neighbors in $K$, and every center has at least one such neighbor.
Thus $|I|$ is the number of stars, excluding both clique vertices and leaves.
We call this the \emph{non-strict $K_{1,r}$-split graph} class, since its
stars may have different numbers of leaves between zero and $r$. The graph is
a \emph{strict $K_{1,r}$-split graph} if every
component of $G[I\cup J]$ is isomorphic to $K_{1,r}$. Thus strict
$K_{1,0}$-split graphs are precisely split graphs. For $u\in I$, write
$N_G^J(u)=N_G(u)\cap J$ and $d_G^J(u)=|N_G^J(u)|$. In a strict
$K_{1,1}$-split graph, every center has a unique leaf and hence $|I|=|J|$.
We call instances \emph{balanced} when $|K|\leq c|I|$ for a fixed constant
$c$ independent of graph size: the number of clique vertices is at most
$c$ times the number of star centers.

\section{Perfect Italian Domination in Split Graphs}\label{sec:pid-split}
Throughout this section, let $G=(K\cup I,E)$ be a connected split graph with a
fixed split partition, where $K$ is a clique and $I$ is an independent set. For
a function $f:V(G)\rightarrow\{0,1,2\}$ and a set $X\subseteq V(G)$, write
$f(X)=\sum_{x\in X}f(x)$.

\begin{lemma}\label{lem:clique-dichotomy}
For every PID-function $f$ of $G$, either $f(v)>0$ for all $v\in K$, or
$f(K)\leq2$. More precisely, exactly one of the following holds:
\begin{enumerate}[label=\textnormal{(\roman*)}]
    \item every vertex of $K$ has label zero;
    \item exactly one vertex of $K$ has label one and at least one vertex of
    $K$ has label zero;
    \item exactly one vertex of $K$ has label two and at least one vertex of
    $K$ has label zero;
    \item exactly two vertices of $K$ have label one and at least one vertex
    of $K$ has label zero; or
    \item every vertex of $K$ has a positive label.
\end{enumerate}
\end{lemma}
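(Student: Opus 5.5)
The proof rests on one observation. Every vertex of $K$ is adjacent to all other vertices of $K$. So if some $w\in K$ has $f(w)=0$, the PID condition at $w$ gives
\[
f(K\setminus\{w\}) \le \sum_{v\in N(w)} f(v) = 2 .
\]
Since $f(w)=0$, this yields $f(K)=f(K\setminus\{w\})\le 2$. This already proves the coarse claim: either no vertex of $K$ has label zero, so $f(v)>0$ for all $v\in K$, or some vertex of $K$ is zero and then $f(K)\le 2$.

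For the refined statement we split according to whether $K$ contains a zero-labeled vertex. If it does not, we are in case (v). If it does, then $f(K)\in\{0,1,2\}$, and the multiset of positive labels on $K$ has sum at most two. The only possibilities are: no positive label, giving (i); a single label one, giving (ii); a single label two, giving (iii); or two labels one, giving (iv). In cases (ii)--(iv), the existence of a zero vertex in $K$ is exactly the hypothesis of this branch. In case (i), every vertex of $K$ is zero, which is consistent because $K$ is nonempty (for a connected split graph with maximal $K$ we may assume $K\neq\emptyset$; if one worries about the degenerate case, it falls under (i) or (v) vacuously).

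Next comes mutual exclusivity. Cases (i)--(iv) are distinguished by the multiset of positive labels on $K$: respectively $\emptyset$, $\{1\}$, $\{2\}$, $\{1,1\}$. Each of these cases contains a zero vertex of $K$, whereas (v) contains none. Hence exactly one case holds.

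There is no real obstacle. The only point needing care is the bookkeeping that every positive-label pattern summing to at most $2$ is listed, and that case (i) correctly uses $K\neq\emptyset$. The key inequality uses only the clique adjacency inside $K$; labels on $I$ can only increase the neighborhood sum, so they are dropped via the inequality above.
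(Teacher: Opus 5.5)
Your proposal is correct and is essentially the paper's proof: the PID condition at a zero-labeled clique vertex $w$ gives $f(K)=f(K\setminus\{w\})\le 2$, and listing the positive-label patterns with sum at most two yields (i)--(iv), with (v) covering the zero-free case. One small correction to your aside: if $K$ were empty, (i) and (v) would both hold vacuously, so the degenerate case would violate exclusivity rather than fall into one case; it is $K\neq\emptyset$ (which holds since $G$ is connected and $K$ is maximal) that makes the cases pairwise disjoint.
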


\begin{proof}
Let $v\in K$ satisfy $f(v)=0$. Since $K$ is a clique, every vertex of
$K\setminus\{v\}$ is adjacent to $v$. Hence
$f(K)=\sum_{u\in K\setminus\{v\}}f(u) \leq \sum_{u\in N_G(v)}f(u)=2$.
If $K$ has a zero-labeled vertex, the inequality $f(K)\leq2$ gives precisely
cases (i)--(iv). If it has none, case (v) holds. The alternatives are therefore
exhaustive and pairwise disjoint.
\qed
\end{proof}


\begin{lemma}\label{lem:low-clique}
The first four cases of Lemma~\ref{lem:clique-dichotomy} can be solved in
polynomial time.
\end{lemma}

\begin{proof}
\begin{enumerate}[label=\textnormal{(\roman*)}]
    \item If $f(K)=0$, a valid extension exists if and only if
    $1\leq d_G^I(v)\leq2$ for every $v\in K$ and the requirements
    $d_G^I(v)=1 \Longrightarrow f(x)=2$
    for the unique $x\in N_G^I(v)$, and
    $d_G^I(v)=2 \Longrightarrow f(x)=f(y)=1$
    for $N_G^I(v)=\{x,y\}$ are mutually consistent.

    \item If $f(p)=1$ and $f(K\setminus\{p\})=0$, the minimum extension assigns
    one to every vertex of $I$. It is valid if and only if
    $d_G^I(v)=1$ for all $v\in K\setminus\{p\}$, and has weight $|I|+1$.

    \item If $f(p)=2$ and $f(K\setminus\{p\})=0$, an extension exists if and
    only if $I\subseteq N_G(p)$. In that case all vertices other than $p$ have
    label zero, and the weight is two.

    \item If $f(p)=f(q)=1$ and all other clique vertices have label zero, the
    minimum extension is
    \[
       f(x)=\begin{cases}
       0, & \{p,q\}\subseteq N_G(x),\\
       1, & \text{otherwise},
       \end{cases}
       \qquad x\in I.
    \]
    It is valid if and only if
    $N_G^I(K\setminus\{p,q\})\subseteq N_G(p)\cap N_G(q)$, and has weight
    $2+\bigl|\{x\in I:\{p,q\}\nsubseteq N_G(x)\}\bigr|$.
\end{enumerate}
Suppose first that $f(K)=0$. A vertex $x\in I$ cannot have label zero, since
all its neighbors then have label zero. Thus every contribution from $I$ to a
zero-labeled clique vertex is positive. The required neighbor-sum two implies
that a clique vertex has either one independent neighbor of label two or two
independent neighbors, both of label one. Degrees zero and at least three are
therefore impossible. Since $G$ is connected, every vertex of $I$ occurs in at
least one such requirement. Checking whether the resulting forced labels are
consistent gives both necessity and sufficiency.

Next suppose that $p$ is the unique clique vertex of label one. For every
$x\in I$, the label sum in $N_G(x)\cap K$ is at most one, so $x$ cannot have
label zero. Minimality therefore assigns label one to every vertex of $I$.
Each vertex of $K\setminus\{p\}$ already receives one from $p$ and must receive
exactly one further unit from $I$, which is equivalent to $d_G^I(v)=1$.

Suppose that $p$ is the unique clique vertex of label two. If some $x\in I$
is not adjacent to $p$, then $x$ cannot have label zero and hence has a
positive label in a minimum extension. Since $G$ is connected, $x$ has a
neighbor $v\in K\setminus\{p\}$. But $v$ already receives two from $p$, and
the positive value of $x$ makes its neighbor-sum larger than two, a
contradiction. Conversely, if $p$ is adjacent to all of $I$, assigning zero to
all other vertices is a PID-function.

Finally, suppose that $p$ and $q$ have label one. An independent vertex can
have label zero exactly when it is adjacent to both $p$ and $q$; otherwise it
must be positive and is assigned label one in a minimum extension. Every
vertex of $K\setminus\{p,q\}$ already receives two from $p$ and $q$, so all
its independent neighbors must have label zero. This proves the stated
condition and weight formula.
\qed
\end{proof}

Lemma~\ref{lem:low-clique} resolves all cases in which the clique contains a
zero-labeled vertex. It remains to consider the fifth case of
Lemma~\ref{lem:clique-dichotomy}, where every clique vertex is positive. Once
the labels on $K$ are fixed, the vertices of $I$ can be optimized
independently, leading to the following  extension.

\begin{lemma}\label{lem:canonical-extension}
Let $g:K\rightarrow\{1,2\}$. Its unique minimum-weight extension to $I$ is
\[
 f(x)=\begin{cases}
 0, & \displaystyle\sum_{v\in N_G(x)\cap K}g(v)=2,\\
 1, & \text{otherwise}.
 \end{cases}
\]
\end{lemma}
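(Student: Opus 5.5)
The plan is to show three things: the displayed $f$ (together with $g$ on $K$) is a PID-function; every PID-function extending $g$ has $f'(x)\geq f(x)$ for each $x\in I$; and any extension of the same weight coincides with $f$. The key observation is that, because every clique vertex is positive, the constraints attached to different vertices of $I$ do not interact. For each $x\in I$ write $s(x)=\sum_{v\in N_G(x)\cap K}g(v)$. Since $I$ is independent, $N_G(x)\subseteq K$, so $s(x)$ is fixed by $g$ alone.

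\textbf{Validity.} The only vertices that can carry label zero are vertices of $I$, because $g$ takes values in $\{1,2\}$ on $K$. The PID condition therefore only has to be checked at those $x\in I$ with $f(x)=0$. By construction these are exactly the vertices with $s(x)=2$, which is precisely the required neighbor sum. Vertices of $K$ carry positive labels and impose no constraint, so it does not matter how labels on $I$ contribute to their neighborhoods.

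\textbf{Minimality.} Let $f'$ be any PID-function with $f'|_K=g$, and take $x\in I$. If $s(x)\neq2$, then $f'(x)=0$ would violate the PID condition at $x$, so $f'(x)\geq1=f(x)$. If $s(x)=2$, then trivially $f'(x)\geq0=f(x)$. Summing over $I$ gives $\omega(f')\geq\omega(f)$.

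\textbf{Uniqueness.} Suppose $\omega(f')=\omega(f)$. The pointwise inequalities $f'(x)\geq f(x)$ must then all hold with equality, so $f'=f$ on $I$. The one place needing a sentence of care is this last step: a vertex with $s(x)\neq2$ could in principle receive label $2$ in another valid extension. That extension is not of minimum weight, and the pointwise comparison excludes it, so uniqueness holds among minimum-weight extensions. I expect no real obstacle. The essential point is the independence argument, namely that labels on $I$ never enter any constraint because no zero-labeled vertex lies in $K$.
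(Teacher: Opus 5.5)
Your proposal is correct and follows essentially the same route as the paper. Both arguments use the fact that $K$ has no zero labels, so the constraints decouple, and a vertex $x\in I$ can be zero exactly when its clique-neighbor sum is two, with label one optimal otherwise. Your pointwise comparison $f'(x)\geq f(x)$ also makes the uniqueness claim explicit, which the paper's proof leaves implicit.
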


\begin{proof}
There are no zero-labeled vertices in $K$, so the labels assigned to vertices
of $I$ do not have to satisfy any condition at a clique vertex. A vertex
$x\in I$ can be assigned zero precisely when its clique-neighbor sum is two.
If that sum is not two, $x$ must be positive, and label one is optimal.
\qed
\end{proof}

To complete case (v) of Lemma~\ref{lem:clique-dichotomy}, we must optimize the
 extension of Lemma~\ref{lem:canonical-extension} over all positive
labelings of $K$. For this purpose, construct a weighted graph $H$ with vertex
set $K$. For $v\in K$, let
$p(v)=\bigl|\{x\in I:N_G(x)=\{v\}\}\bigr|$,
and, for distinct $u,v\in K$, let
$q(uv)=\bigl|\{x\in I:N_G(x)=\{u,v\}\}\bigr|$.
Only pairs with positive weight need to be included as edges of $H$. Let
$r=\bigl|\{x\in I:d_G(x)\geq3\}\bigr|$.
For a partition $K=K_1\mathbin{\dot\cup}K_2$, define
\begin{equation}
 \Phi(K_1,K_2)=r+|K_1|+2|K_2|+\sum_{v\in K_1}p(v)
 +\sum_{\substack{uv\in E(H)\\\{u,v\}\cap K_2\neq\emptyset}}q(uv).
 \label{eq:partition-cost}
\end{equation}

\begin{lemma}\label{lem:pid-to-partition}
The fifth case of Lemma~\ref{lem:clique-dichotomy}, in which
$f(v)>0$ for every $v\in K$, has minimum PID weight
$\min\Phi(K_1,K_2)$. An optimal partition labels $K_1$ by one and $K_2$ by
two and is extended to $I$ as in Lemma~\ref{lem:canonical-extension}.
\end{lemma}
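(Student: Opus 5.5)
The plan is to set up a weight-preserving correspondence between positive labelings of $K$ and partitions $K=K_1\mathbin{\dot\cup}K_2$. We then show that for each such labeling $g$, the minimum weight of a PID-function extending $g$ is exactly $\Phi(K_1,K_2)$. Given a PID-function $f$ in case (v), put $K_1=f^{-1}(1)\cap K$ and $K_2=f^{-1}(2)\cap K$, and let $g=f|_K$. Lemma~\ref{lem:canonical-extension} says that the canonical extension of $g$ is a valid PID-function of weight at most $\omega(f)$. So it suffices to compute the weight of the canonical extension for an arbitrary partition and check that it equals $\Phi(K_1,K_2)$. Minimizing over all partitions then gives the claim, and the minimizing partition, extended canonically, is optimal.

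To compute the weight of the canonical extension, note that the clique contributes $|K_1|+2|K_2|$. For the vertices $x\in I$ we split by $d_G(x)$. Connectivity guarantees $d_G(x)\geq1$, and $N_G(x)\subseteq K$.
\begin{enumerate}[label=(\alph*)]
\item If $N_G(x)=\{v\}$, the neighbor sum is $g(v)$. So $f(x)=0$ if and only if $v\in K_2$, and otherwise $f(x)=1$. Summed over such $x$, this contributes $\sum_{v\in K_1}p(v)$.
\item If $N_G(x)=\{u,v\}$, the neighbor sum is $g(u)+g(v)\geq2$, with equality exactly when both $u,v\in K_1$. So $f(x)=1$ precisely when $\{u,v\}\cap K_2\neq\emptyset$. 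This contributes the $q$-sum over edges of $H$ meeting $K_2$.
\item If $d_G(x)\geq3$, the neighbor sum is at least $3$. So $f(x)=1$ always, contributing $r$.
\end{enumerate}
Adding these terms gives exactly equation~\eqref{eq:partition-cost}.

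The one point needing care is the validity of the canonical extension as a PID-function. No clique vertex is zero, so only the zero-labeled vertices of $I$ impose constraints, and those are satisfied by construction. The main obstacle is really just this bookkeeping: making sure the three degree classes partition $I$, and that each $x$ with $d_G(x)=2$ is counted once in a single $q(uv)$. Both follow directly from the definitions of $p$, $q$ and $r$.
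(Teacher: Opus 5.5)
Your proof is correct and follows the paper's argument. You replace a case-(v) PID-function by the canonical extension of its clique labels (Lemma~\ref{lem:canonical-extension}), then account for the clique weight plus the contributions of the degree-one, degree-two, and degree-at-least-three vertices of $I$, which are exactly the terms of \eqref{eq:partition-cost}; you also state explicitly the no-weight-increase replacement step that the paper leaves implicit.
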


\begin{proof}
The clique contributes $|K_1|+2|K_2|$. If $x\in I$ has degree at least
three, its clique-neighbor sum is at least three, so the label of $x$ is fixed to be one. A degree-one vertex adjacent to $v$ contributes one exactly when
$v\in K_1$. A degree-two vertex with neighborhood $\{u,v\}$ contributes one
exactly when at least one of $u,v$ belongs to $K_2$. These are precisely the
terms in the partition-cost formula \eqref{eq:partition-cost}. Conversely, the labeling obtained from a
partition and Lemma~\ref{lem:canonical-extension} has exactly this weight.
\qed
\end{proof}

Figure~\ref{fig:pid-partition} illustrates the weighted graph $H$ associated with a split graph $G$ and an optimal partition $(K_1,K_2)$ from
Lemma~\ref{lem:pid-to-partition}.
\begin{figure}[t]
\centering
\begin{tikzpicture}[scale=0.81,transform shape,
  every node/.style={font=\small}]
  \begin{scope}[xshift=-4.5cm]
    \node[vertexdot] (a) at (0,1.6) {};
    \node[vertexdot] (b) at (-1.5,0) {};
    \node[vertexdot] (c) at (1.5,0) {};
    \node[vertexlabel,above right=0mm and 1mm of a] {$a$};
    \node[vertexlabel,left=1mm of b] {$b$};
    \node[vertexlabel,right=1mm of c] {$c$};
    \draw[thick] (a)--(b)--(c)--(a);
    \node[vertexdot] (pa1) at (-1.1,2.9) {};
    \node[vertexdot] (pa2) at (0,3.3) {};
    \node[vertexdot] (pa3) at (1.1,2.9) {};
    \node[vertexdot] (ab) at (-2.5,1.7) {};
    \node[vertexdot] (bc1) at (-0.8,-1.5) {};
    \node[vertexdot] (bc2) at (0.8,-1.5) {};
    \node[vertexlabel,above left=0mm and 1mm of pa1] {$x_1$};
    \node[vertexlabel,above=1mm of pa2] {$x_2$};
    \node[vertexlabel,above right=0mm and 1mm of pa3] {$x_3$};
    \node[vertexlabel,left=1mm of ab] {$y_1$};
    \node[vertexlabel,below left=0mm and 1mm of bc1] {$y_2$};
    \node[vertexlabel,below right=0mm and 1mm of bc2] {$y_3$};
    \draw (a)--(pa1) (a)--(pa2) (a)--(pa3) (a)--(ab) (b)--(ab);
    \draw (b)--(bc1) (c)--(bc1) (b)--(bc2) (c)--(bc2);
    \node[font=\bfseries] at (0,-2.2) {split graph $G$};
  \end{scope}
  \node[font=\Large] at (0,0.7) {$\Longrightarrow$};
  \begin{scope}[xshift=4.4cm]
    \node[vertexdot] (ha) at (0,1.6) {};
    \node[vertexdot] (hb) at (-1.5,0) {};
    \node[vertexdot] (hc) at (1.5,0) {};
    \node[vertexlabel,above=1mm of ha] {$a$};
    \node[vertexlabel,left=1mm of hb] {$b$};
    \node[vertexlabel,right=1mm of hc] {$c$};
    \draw[blue!70,thick] (ha)--node[left,text=blue!70!black]
      {$q(ab)=1$} (hb);
    \draw[thick] (hb)--node[below] {$q(bc)=2$} (hc);
    \draw[dashed] (ha)--node[right] {$q(ac)=0$} (hc);
    \node[right=2mm of ha] {$p(a)=3$};
    \node[left=2mm of hb] {$p(b)=0$};
    \node[right=2mm of hc] {$p(c)=0$};
    \node[font=\bfseries] at (0,-1.25) {associated weighted graph $H$};
    \node at (0,-1.9) {fixed contribution $r=0$};
    \node at (0,-2.5) {$K_1=\{b,c\}$;\quad $K_2=\{a\}$};
  \end{scope}
\end{tikzpicture}
\caption{A split graph $G$ with maximal clique $K=\{a,b,c\}$ and its weighted
graph $H$. Degree-one vertices of $G$ become vertex weights in $H$, degree-two
vertices become edge weights, and vertices of degree at least three contribute
the constant $r$; here $r=0$. The blue edge is the edge-weight contribution
selected by the illustrated optimal partition $K_1=\{b,c\}$,
$K_2=\{a\}$.}
\label{fig:pid-partition}
\end{figure}
We now complete the treatment of case (v) of
Lemma~\ref{lem:clique-dichotomy} by reducing the weighted partition instances
of Lemma~\ref{lem:pid-to-partition} to a minimum-capacity cut problem.

\subsection{Minimum-Capacity Cut Formulation}\label{sec:partition-cut}
The \emph{minimum-capacity $s$--$t$ cut problem} takes as input a directed
network $D=(V_D,A_D)$, a source $s$, a sink $t$, and a nonnegative capacity
$c(a)$ for every arc $a\in A_D$. An $s$--$t$ cut is a partition $(S,T)$ of
$V_D$ with $s\in S$ and $t\in T$, and its capacity is
$c(S,T)=\sum_{xy\in A_D:\,x\in S,\,y\in T}c(xy)$.
The objective is to find an $s$--$t$ cut of minimum capacity.

Consider a weighted partition instance $(H,p,q,r)$ obtained in
Lemma~\ref{lem:pid-to-partition}, where $V(H)=K$. Map this instance to a
network $D(H,p,q)$ as follows. For each $v\in K$, create a vertex node $v'$,
and for each $uv\in E(H)$, create a pair node $e_{uv}$. Add the source $s$,
the sink $t$, and set
$M=1+|K|+\sum_{v\in K}p(v)+\sum_{uv\in E(H)}q(uv)$.
Add arcs $s\to v'$ of capacity one, $v'\to t$ of capacity $p(v)$,
$s\to e_{uv}$ of capacity $q(uv)$, and $e_{uv}\to u'$ and
$e_{uv}\to v'$ of capacity $M$. This construction has polynomial size. A
partition $K=K_1\mathbin{\dot\cup}K_2$ is mapped to the cut whose source side
is
\[
 S(K_1)=\{s\}\cup\{v':v\in K_1\}
       \cup\{e_{uv}:u,v\in K_1\},
\]
and whose sink side is $V_D\setminus S(K_1)$.

\begin{lemma}
\label{lem:partition-to-cut}
Let $(H,p,q,r)$ be a weighted partition instance with objective function
$\Phi$ defined in \eqref{eq:partition-cost}, and let $D(H,p,q)$ be the network
constructed above. Then
\[
 \min_{K=K_1\mathbin{\dot\cup}K_2}\Phi(K_1,K_2)
 =r+|K|+\lambda(D),
\]
where $\lambda(D)$ is the minimum $s$--$t$ cut capacity in $D$.
\end{lemma}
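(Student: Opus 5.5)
We plan to prove the identity by establishing two inequalities between $\Phi$ and cut capacities. First, rewrite $\Phi$ so that its constant part is visible: since $|K_1|+2|K_2|=|K|+|K_2|$, we have
\[
 \Phi(K_1,K_2)-r-|K| \;=\; |K_2|+\sum_{v\in K_1}p(v)+\sum_{\substack{uv\in E(H)\\ \{u,v\}\cap K_2\neq\emptyset}}q(uv).
\]
Call this residual $\Psi(K_1,K_2)$. It then suffices to show $\min\Psi=\lambda(D)$.

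\textbf{Step 1: every partition yields a cut of capacity exactly $\Psi$.} For a partition $(K_1,K_2)$, compute $c(S(K_1),V_D\setminus S(K_1))$ arc by arc.
\begin{itemize}
 \item An arc $s\to v'$ crosses exactly when $v\in K_2$; these arcs contribute $|K_2|$.
 \item An arc $v'\to t$ crosses exactly when $v\in K_1$; these arcs contribute $\sum_{v\in K_1}p(v)$.
 \item An arc $s\to e_{uv}$ crosses exactly when $e_{uv}\notin S(K_1)$, that is, when $\{u,v\}\cap K_2\ne\emptyset$; these arcs contribute the $q$-sum.
 \item An arc $e_{uv}\to u'$ with $e_{uv}$ on the source side has $u,v\in K_1$, so $u'$ is also on the source side. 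Hence no $M$-arc crosses.
\end{itemize}
The capacity therefore equals $\Psi(K_1,K_2)$, which gives $\lambda(D)\le\min\Psi$.

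\textbf{Step 2: a minimum cut induces a partition of no larger cost.} This is the step needing care. Take any minimum cut $(S,T)$. Consider the cut $(S(K),T(K))$ with $K_1=K$ (all $v'$ and all $e_{uv}$ on the source side). Its capacity is $\sum_v p(v)<M$, so $\lambda(D)<M$, and therefore no $M$-arc crosses a minimum cut. Consequently, if $e_{uv}\in S$ then $u',v'\in S$.

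Now set $K_1=\{v: v'\in S\}$. Every $v\in K_2$ pays its $s\to v'$ arc, and every $v\in K_1$ pays its $v'\to t$ arc. For an edge $uv$ with $\{u,v\}\cap K_2\ne\emptyset$, the previous observation forces $e_{uv}\in T$, so the arc $s\to e_{uv}$ is paid. Hence $c(S,T)\ge\Psi(K_1,K_2)$. Since $\Psi(K_1,K_2)$ equals the capacity of $S(K_1)$ by Step 1, this gives $\min\Psi\le\lambda(D)$. Combining the two inequalities yields the lemma.

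The main obstacle is Step 2: ruling out a minimum cut that separates $e_{uv}$ from its endpoints. The choice of $M$ exceeding the capacity of a concrete cut handles this. A minor check is that all capacities are nonnegative, which holds since $p,q\ge0$.
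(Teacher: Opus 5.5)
Your proof is correct and follows the paper's argument: you show the canonical cut $S(K_1)$ has capacity $\Psi(K_1,K_2)$, use the size of $M$ to exclude $M$-arcs from every minimum cut, and read off $K_1=\{v:v'\in S\}$. The only differences are minor. You bound $c(S,T)\geq\Psi(K_1,K_2)$ directly, whereas the paper uses an exchange step (moving $e_{uv}$ into $S$, which needs $q(uv)>0$). You also certify $\lambda(D)<M$ with the cut $V_D\setminus\{t\}$ instead of $\{s\}$. Both variations are valid.
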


\begin{proof}
Putting all vertex and pair nodes on the sink side gives a cut of capacity at
most $|K|+\sum_{uv\in E(H)}q(uv)<M$. Thus no minimum cut contains a
capacity-$M$ arc.

For a partition $K=K_1\mathbin{\dot\cup}K_2$, the mapped cut
$(S(K_1),V_D\setminus S(K_1))$ has capacity
\begin{equation}
 \Psi(K_1,K_2)=|K_2|+\sum_{v\in K_1}p(v)
 +\sum_{\substack{uv\in E(H)\\\{u,v\}\cap K_2\neq\emptyset}}q(uv).
 \label{eq:cut-residual}
\end{equation}

Conversely, let $(S,T)$ be a minimum cut. Since it contains no capacity-$M$
arc, $e_{uv}\in S$ implies $u',v'\in S$. If $u',v'\in S$ but
$e_{uv}\in T$, moving $e_{uv}$ to $S$ decreases the cut capacity by
$q(uv)>0$, a contradiction. Therefore every minimum cut satisfies
$e_{uv}\in S$ exactly when
$u',v'\in S$. With
 $K_1=\{v\in K:v'\in S\},\qquad K_2=K\setminus K_1$,
the cut capacity is exactly \eqref{eq:cut-residual}. Since
$\Phi(K_1,K_2)=r+|K|+\Psi(K_1,K_2)$, minimizing the cut capacity is equivalent
to minimizing the partition cost.
\qed
\end{proof}

Consequently, the mapping $K_1\mapsto S(K_1)$ sends every optimal partition
to a minimum cut. Conversely, a minimum cut $(S,T)$ yields an optimal
partition by setting $K_1=\{v\in K:v'\in S\}$ and
$K_2=K\setminus K_1$. Combining Lemma~\ref{lem:partition-to-cut} with the
five cases of Lemma~\ref{lem:clique-dichotomy} proves
Theorem~\ref{thm:pid-split}.
Figure~\ref{fig:partition-cut} illustrates the minimum-cut network associated
with the weighted graph in Figure~\ref{fig:pid-partition} and the cut
corresponding to its optimal partition.

\begin{figure}[t]
\centering
\begin{tikzpicture}[scale=0.92,transform shape,>=stealth,
  every node/.style={font=\small}]
  \node[vertexdot] (s) at (-5.2,0) {};
  \node[vertexdot] (eab) at (-2.5,1.25) {};
  \node[vertexdot] (ebc) at (-2.5,-1.25) {};
  \node[vertexdot] (a) at (0,2.35) {};
  \node[vertexdot] (b) at (0,0) {};
  \node[vertexdot] (c) at (0,-2.35) {};
  \node[vertexdot] (t) at (4.6,0) {};
  \node[vertexlabel,left=1mm of s] {$s$};
  \node[vertexlabel,above=1mm of eab] {$e_{ab}$};
  \node[vertexlabel,below=1mm of ebc] {$e_{bc}$};
  \node[vertexlabel,above=1mm of a] {$a'$};
  \node[vertexlabel,right=1mm of b] {$b'$};
  \node[vertexlabel,below=1mm of c] {$c'$};
  \node[vertexlabel,right=1mm of t] {$t$};

  \draw[->,blue!70,thick] (s)--node[above left] {$q(ab)=1$} (eab);
  \draw[->,thick] (s)--node[below left] {$q(bc)=2$} (ebc);
  \draw[->,blue!70,thick] (s) to[bend left=42] node[above] {$1$} (a);
  \draw[->] (s)--node[above] {$1$} (b);
  \draw[->] (s) to[bend right=42] node[below] {$1$} (c);

  \draw[->,very thick] (eab)--node[above] {$M$} (a);
  \draw[->,very thick] (eab)--node[pos=0.35,below] {$M$} (b);
  \draw[->,very thick] (ebc)--node[pos=0.35,above] {$M$} (b);
  \draw[->,very thick] (ebc)--node[below] {$M$} (c);

  \draw[->] (a) to[bend left=24]
    node[above] {$p(a)=3$} (t);
  \draw[->,blue!70,thick] (b)--node[above] {$p(b)=0$} (t);
  \draw[->,blue!70,thick] (c) to[bend right=24]
    node[below] {$p(c)=0$} (t);

  \node[draw,dashed,rounded corners,fit=(eab)(ebc),inner sep=5mm] {};
  \node[draw,dashed,rounded corners,fit=(a)(b)(c),inner sep=5mm] {};
  \node[align=center] at (0,-3.55) {Source-side vertex nodes correspond to
  $K_1$; Sink-side vertex nodes correspond to $K_2$};
\end{tikzpicture}
\caption{Minimum-cut network for Figure~\ref{fig:pid-partition}. The dashed
regions contain the edge nodes and vertex nodes of $H$, respectively; very
thick arcs have capacity $M$. The blue arcs form a minimum-capacity cut of
value two, corresponding to $K_1=\{b,c\}$ and $K_2=\{a\}$. In particular,
the cut contains the source-to-edge-node arc $s\to e_{ab}$.}
\label{fig:partition-cut}
\end{figure}

\begin{theorem}\label{thm:pid-split}
PID is polynomial-time solvable on split
graphs. Moreover, a minimum-weight PID-function can be computed in
$O(n^4)$ time, where $n=|V(G)|$.
\end{theorem}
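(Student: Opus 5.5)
The proof will assemble the earlier lemmas into an algorithm and then count its running time. By Lemma~\ref{lem:clique-dichotomy}, every PID-function of $G$ falls into exactly one of the cases (i)--(v). So $\gamma_I^P(G)$ is the minimum, over the five cases, of the best weight achievable within each case. The algorithm will compute a candidate optimum for each case, discard infeasible ones, and return the lightest. Correctness then follows directly from Lemmas~\ref{lem:low-clique}, \ref{lem:canonical-extension}, \ref{lem:pid-to-partition} and~\ref{lem:partition-to-cut}. These state that each case's candidate is a valid PID-function whenever it exists, and that it has minimum weight within its case.

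For cases (i)--(iv) I will bound the enumeration directly.
\begin{itemize}
\item Case (i) needs one pass over the clique vertices to read off the forced labels on $I$ and check them for consistency, which costs $O(n^2)$.
\item Cases (ii) and (iii) each range over $O(n)$ choices of $p$. Each choice is checked in $O(n^2)$ time, for $O(n^3)$ in total.
\item Case (iv) ranges over $O(n^2)$ pairs $\{p,q\}$. For each pair I check the condition $N_G^I(K\setminus\{p,q\})\subseteq N_G(p)\cap N_G(q)$ and evaluate the weight formula, which costs $O(n^2)$ per pair with an adjacency matrix, for $O(n^4)$ overall.
\end{itemize}
A small subtlety is that cases (ii)--(iv) require at least one zero-labelled clique vertex. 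The extensions described in Lemma~\ref{lem:low-clique} do not otherwise depend on that vertex, and any extension they return is a genuine PID-function with the stated weight. So I will simply accept every valid candidate. This cannot go below the optimum, because each candidate is a valid PID-function.

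For case (v), I first build $H$, $p$, $q$ and $r$ by scanning the neighbourhood of every $x\in I$, in $O(n^2)$ time. I then build $D(H,p,q)$. It has $O(n)$ vertex nodes and $O(n^2)$ pair nodes (edges of $H$ with positive weight), so $|V_D|=O(n^2)$, and the number of arcs is $|A_D|=O(n^2)$.

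The main obstacle is bounding the min-cut computation by $O(n^4)$. My first attempt would be to note that the maximum flow value is at most $|K|\le n$: the cut placing all nodes other than $s$ on the sink side has capacity $|K|+\sum q(uv)$, which is not small in general. A better bound comes from the cut $S=\{s\}\cup\{v'\}\cup\{e_{uv}\}$, which has capacity $\sum_v p(v)\le n$. Hence $\lambda(D)\le n$, and Ford--Fulkerson with BFS augmenting paths performs at most $n$ augmentations, each costing $O(|V_D|+|A_D|)=O(n^2)$. That gives $O(n^3)$. If this bound needed care, I would fall back on any standard $O(|V_D||A_D|)$ max-flow algorithm, which gives $O(n^4)$ directly and matches the claimed bound.

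Finally, I convert the minimum cut into $K_1=\{v:v'\in S\}$, as in Lemma~\ref{lem:partition-to-cut}, and extend to $I$ by Lemma~\ref{lem:canonical-extension} in $O(n^2)$ time. Taking the lightest candidate over all five cases gives an overall running time of $O(n^4)$.
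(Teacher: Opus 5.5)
Your proposal is correct and follows the paper's proof. It uses the same exhaustive five-case split from Lemma~\ref{lem:clique-dichotomy}, the same reduction of case (v) to a minimum $s$--$t$ cut via Lemmas~\ref{lem:pid-to-partition} and~\ref{lem:partition-to-cut}, and an $O(n^4)$ total dominated by the pair enumeration in case (iv).

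The only difference is how you bound the flow step. You limit the number of augmentations using $\lambda(D)\le\sum_{v\in K}p(v)\le|I|$, whereas the paper observes $|E(H)|\le|I|$, so $D$ has $O(n)$ nodes and arcs and Edmonds--Karp runs in $O(n^3)$; both arguments work. You should delete the abandoned claim that the flow value is at most $|K|$, because it is false in general.
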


\begin{proof}
By Lemma~\ref{lem:clique-dichotomy}, every PID-function belongs to one of five
exhaustive clique-labeling cases. Lemma~\ref{lem:low-clique} computes the
minimum valid function in the first four cases. For the remaining case,
Lemma~\ref{lem:pid-to-partition} reduces the problem to minimizing
\eqref{eq:partition-cost}, and Lemma~\ref{lem:partition-to-cut} reduces this
partition problem to a minimum-capacity $s$--$t$ cut. By the Max-Flow Min-Cut
Theorem of Ford and Fulkerson~\cite{ford1956maxflow}, a minimum $s$--$t$ cut
in the constructed network $D(H,p,q)$ can be obtained from a maximum $s$--$t$
flow in the same network. The Edmonds--Karp algorithm computes this maximum
$s$--$t$ flow in $O(|V(D)|\,|E(D)|^2)$ time~\cite{edmondskarp1972}.

Let $n=|V(G)|$, $m=|E(G)|$, and $k=|K|$. There are
$1+k+k+\binom{k}{2}=O(k^2)$ low clique-weight candidates, each constructible
and verifiable in $O(n+m)$ time, for a total of $O(k^2(n+m))$. The weighted
graph $H$ is constructed in $O(n+m+k^2)$ time. Every edge of $H$ has positive
weight and represents the neighborhood of at least one degree-two vertex of
$I$. Distinct edges represent distinct neighborhoods, so
 $|E(H)|\leq |\{x\in I:d_G(x)=2\}|\leq |I|$.

Thus $D$ has $2+k+|E(H)|=O(n)$ vertices and $2k+3|E(H)|=O(n)$ arcs.
Edmonds--Karp runs in $O(n^3)$ time on this network. All capacities are
nonnegative integers bounded by $n+1$, so they have $O(\log n)$-bit
representations. The complete algorithm therefore runs in  $O\bigl(k^2(n+m)+n^3\bigr)= O(n^4)$,
which proves the theorem.
\qed
\end{proof}

\section{Hardness at Split-Deletion Distance One}
\label{sec:split-plus-one}
For a graph $G$, define its deletion distance to the class of split graphs by
\[
  \operatorname{sd}(G)
  =\min\bigl\{|S|:S\subseteq V(G)\text{ and }G-S
      \text{ is a split graph}\bigr\}.
\]
\begin{theorem}
\label{thm:split-plus-one-hardness}
\textnormal{\textsc{PID}} is NP-complete on graphs $G$ with
$\operatorname{sd}(G)=1$.
\end{theorem}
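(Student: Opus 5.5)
The plan is a polynomial-time reduction from \textsc{Exact Cover by 3-Sets} (X3C), which is NP-complete. Membership of PID in NP is immediate: a labeling $f$ is a certificate, and both its weight and the neighborhood sums at zero-labeled vertices are checked in linear time. All the work is in the hardness direction. Theorem~\ref{thm:pid-split} rules out encoding the hardness inside a split graph alone, so the single deleted vertex $w$ must carry it. Its purpose is to break the clique dichotomy of Lemma~\ref{lem:clique-dichotomy}. Independent-side vertices can then be zero while receiving part of their sum from outside the clique, and clique vertices can be zero without the rest of the clique being nearly empty.

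Let the X3C instance have ground set $U$ with $|U|=3q$ and family $\mathcal{C}$. I would build $G$ with $G-w$ split, in three steps.
\begin{enumerate}[label=\textnormal{(\arabic*)}]
\item Put one vertex $x_u$ into $I$ for each element $u\in U$, and one vertex $c_C$ into $K$ for each set $C\in\mathcal{C}$. Join $c_C$ to $x_u$ whenever $u\in C$.
\item Add the apex $w$, adjacent to every $x_u$ and to a small private gadget. The gadget (pendant-type vertices, which are harmless to splitness once $w$ is deleted) forces $f(w)=1$ in every minimum PID-function.
\item Add to the clique a few auxiliary vertices with private independent neighbors. These are designed so that, as in cases (ii)--(iv) of Lemma~\ref{lem:low-clique}, the clique's total label is pinned.
\end{enumerate}
The intended optimal labeling gives $x_u$ label $0$ and lets each $x_u$ receive exactly $1$ from $w$ and exactly $1$ from the set vertices. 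Selected sets get label $1$ and unselected sets get label $0$. A zero $x_u$ then has sum exactly $2$ if and only if $u$ lies in exactly one selected set, which is precisely the exact-cover condition. The threshold $k$ is $q$ plus the constant cost of $w$, its gadget and the clique padding.

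The key steps, in order, are as follows.
\begin{enumerate}[label=\textnormal{(\alph*)}]
\item Prove the forcing lemma: every PID-function of weight at most $k$ has $f(w)=1$ and the prescribed labels on the gadgets.
\item Show the forward direction: an exact cover yields a PID-function of weight $k$.
\item Show the backward direction: a PID-function of weight at most $k$ has every $x_u$ labeled $0$. Then the set vertices labeled positive have label exactly $1$ and form an exact cover.
\end{enumerate}
For step (c) I would use a counting argument. Any $x_u$ given a positive label costs at least one unit. The budget $k$ leaves no slack beyond $q$ units for the set vertices, and covering all $3q$ elements needs at least $q$ sets of size $3$. Together these pin the structure. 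Finally, deleting $w$ leaves a split graph, so $\operatorname{sd}(G)\le 1$. To get $\operatorname{sd}(G)=1$ exactly, I will make the gadget create an induced $C_4$ or $2K_2$ through $w$, so that $G$ itself is not split.

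The main obstacle is the clique of set vertices. A set vertex $c_C$ of label $0$ is adjacent to all other set vertices, so its neighborhood sum already includes the $q$ selected sets, which is far more than $2$. This is exactly the constraint behind Lemma~\ref{lem:clique-dichotomy}. I would try two remedies first.
\begin{itemize}
\item Swap the roles of the sides: put the set vertices in $I$ (each with a pendant-style padding making it positive or zero as intended) and the element vertices in $K$. Let $w$ supply the second unit to element vertices that are not covered by the clique. Here the danger is the analogous clique-sum problem at zero-labeled element vertices.
\item Keep the sets in $K$ but force every clique vertex positive, as in case (v). Then encode selection by labels $1$ versus $2$, with $w$ absorbing the difference at elements.
\end{itemize}
Choosing which of these makes the neighborhood sums exactly $2$ in precisely the exact-cover situations is the delicate part. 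I expect most of the effort to go into verifying that no unintended cheap labeling exists. In particular, this means ruling out labelings where $w$ gets label $0$ or $2$, or where some clique vertex gets label $0$.
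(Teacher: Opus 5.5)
There is a genuine gap: the proposal never settles on a graph for which steps (a)--(c) can actually be carried out. The construction you spell out (set vertices in $K$, element vertices in $I$) fails for the reason you give yourself, and neither remedy is worked out.

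The second remedy cannot work. Suppose every clique vertex is positive and $f(w)=1$. Then neither the clique nor $w$ carries a PID constraint. A zero $x_u$ needs clique-neighbour sum exactly $1$, i.e.\ exactly one clique neighbour, of label $1$. So each independent vertex is priced locally by the clique labels. That is the decoupled situation of case (v), which Section~3 solves in polynomial time, so no exact-cover structure survives. Under RXC3, where every element lies in three sets, no $x_u$ can even be zero.

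Step (a) fails too. Since $G-w$ must be split, a gadget private to $w$ can only consist of pendants of $w$; two edges in different components of $G-w$ would induce a $2K_2$. But a zero pendant forces $f(w)=2$, and $f(w)=1$ makes every pendant cost one. So pendants push $w$ towards label $2$, not $1$. Finally, the mechanism in step (3) that should pin the clique weight is never specified, and the counting in (c) cannot replace it.

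Your first remedy (elements in the clique, sets on the independent side) is the orientation the paper uses, but it needs two ideas you do not have. Note that in the paper the deleted vertex is called $s$, and $w$ is a \emph{clique} vertex.

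The first idea is an asymmetric supply. Both $w$ and $s$ are joined to every set vertex $x_F$, but only $w$ reaches the element vertices $k_u$, through the clique. The vertex $s$ is adjacent only to the set vertices and to a block $T$ of test vertices in the clique. Take $f(s)=f(w)=1$ and label the cover sets $1$. A zero $k_u$ then gets one unit from $w$ and one from its unique selected set. An unselected $x_F$ or a test vertex gets one unit from each of $s$ and $w$. The total weight is $b=q+2$.

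The second idea is the pinning. Since $|T|=q+3>b$, some test vertex is zero, and its constraint is the single equation $f(s)+f(C)=2$. The backward direction is then a case analysis on $f(s)$.
\begin{itemize}
\item If $f(s)=0$, then $f(C)=2$. Every set contains a zero element vertex already saturated by $C$, so all set vertices are zero. The constraint at $s$ then puts both units on $T$, and the set vertices are left with neighbourhood sum $0$, a contradiction.
\item If $f(s)=2$, then $f(C)=0$. Summing the element constraints gives $f(X)=2q$, so $\omega(f)\ge 2q+2>b$.
\item If $f(s)=1$, exactly one clique vertex has label one. It cannot be an element vertex: then $f(w)=0$ forces $f(X)=1$, and one set vertex cannot serve the other $3q-1\ge5$ elements. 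So every $k_u$ is zero, and its constraint says directly that $u$ lies in exactly one positive set, of label $1$.
\end{itemize}
The induced four-cycle $s,t,k_u,x_F$ then gives $\operatorname{sd}(G)=1$.
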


\begin{proof}
Membership in NP is immediate.  For NP-hardness, we reduce from
\textsc{Restricted Exact Cover by 3-Sets} (RXC3), which is NP-complete
\cite{gonzalez1985clustering}.

Let $(U,\mathcal F)$ be an RX3C instance, where
$U=\{u_1,\ldots,u_{3q}\}$
and $\mathcal F$ consists of $3q$ three-element subsets of $U$, with every
element occurring in exactly three members of $\mathcal F$.  The question is
whether $\mathcal F$ contains a subfamily that partitions $U$.  We may assume
that $q\geq2$, since the finitely many smaller instances can be decided
directly and mapped to fixed yes- or no-instances.

Set
$b=q+2$ and $L=b+1=q+3$.
We construct a graph $G$ with the following vertices:
\begin{itemize}
  \item one distinguished vertex $s$;
  \item an element vertex $k_u$ for every $u\in U$;
  \item one vertex $w$;
  \item test vertices $T=\{t_1,\ldots,t_L\}$; and
  \item a set vertex $x_F$ for every $F\in\mathcal F$.
\end{itemize}
Put $ K_U=\{k_u:u\in U\},  C=\{w\}\cup K_U\cup T, X=\{x_F:F\in\mathcal F\}$.
Make $C$ a clique and $X$ an independent set.  Join $s$ to every vertex of
$T\cup X$, join $w$ to every vertex of $X$, and join $x_F$ to $k_u$
exactly when $u\in F$.  No other edges are added.  In particular, $s$ is
not adjacent to $w$ or to an element vertex, and the test vertices have no
neighbors in $X$. Figure~\ref{fig:split-plus-one-full} shows the complete construction for an
illustrative RXC3 instance with six elements and six sets.

\begin{figure}[t]
\centering
\begin{tikzpicture}[scale=0.70,transform shape,
  every node/.style={font=\scriptsize}]
  \foreach \n/\lab/\val/\yy in {
    t1/t_1/0/4.0,t2/t_2/0/3.45,t3/t_3/0/2.9,
    t4/t_4/0/2.35,t5/t_5/0/1.8,w/w/1/1.05,
    k1/k_{u_1}/0/0.25,k2/k_{u_2}/0/-0.35,
    k3/k_{u_3}/0/-0.95,k4/k_{u_4}/0/-1.55,
    k5/k_{u_5}/0/-2.15,k6/k_{u_6}/0/-2.75}
  {
    \node[vertexdot] (C\n) at (-4.9,\yy+0.2) {};
    \node[vertexlabel,left=1mm of C\n] {$\lab$};
    \node[pidvalue,left=10mm of C\n] {$\val$};
  }
  \node[draw,dashed,rounded corners,
    fit=(Ct1)(Ct2)(Ct3)(Ct4)(Ct5)(Cw)(Ck1)(Ck2)(Ck3)(Ck4)(Ck5)(Ck6),
    inner sep=6mm,label=above:{ }] {};

  \foreach \n/\lab/\val/\yy in {
    x1/x_{F_1}/0/2.8,x2/x_{F_2}/1/1.8,x3/x_{F_3}/0/0.8,
    x4/x_{F_4}/1/-0.2,x5/x_{F_5}/0/-1.2,x6/x_{F_6}/0/-2.2}
  {
    \node[vertexdot] (X\n) at (0,\yy) {};
    \node[vertexlabel,above=0.5mm of X\n] {$\lab$};
    \node[pidvalue,below=0.5mm of X\n] {$\val$};
  }

  \node[vertexdot] (Es) at (4.2,0.3) {};
  \node[vertexlabel,right=1mm of Es] {$s$};
  \node[pidvalue,right=7mm of Es] {$1$};

  \draw (Ck1)--(Xx1) (Ck2)--(Xx1) (Ck3)--(Xx1);
  \draw (Ck1)--(Xx2) (Ck2)--(Xx2) (Ck4)--(Xx2);
  \draw (Ck1)--(Xx3) (Ck5)--(Xx3) (Ck6)--(Xx3);
  \draw (Ck3)--(Xx4) (Ck5)--(Xx4) (Ck6)--(Xx4);
  \draw (Ck2)--(Xx5) (Ck4)--(Xx5) (Ck5)--(Xx5);
  \draw (Ck3)--(Xx6) (Ck4)--(Xx6) (Ck6)--(Xx6);

  \foreach \x in {Xx1,Xx2,Xx3,Xx4,Xx5,Xx6}
    \draw[blue!65] (Cw)--(\x);

  \foreach \x in {Xx1,Xx2,Xx3,Xx4,Xx5,Xx6}
    \draw[blue!65] (Es)--(\x);
  \draw[blue!65] (Es) .. controls (3.5,4.9) and (-3.5,4.9) .. (Ct1);
  \draw[blue!65] (Es) .. controls (3.3,4.6) and (-3.4,4.6) .. (Ct2);
  \draw[blue!65] (Es) .. controls (3.1,4.3) and (-3.3,4.3) .. (Ct3);
  \draw[blue!65] (Es) .. controls (2.9,4.0) and (-3.2,4.0) .. (Ct4);
  \draw[blue!65] (Es) .. controls (2.7,3.7) and (-3.1,3.7) .. (Ct5);

  \node[font=\bfseries] at (-4.9,4.85) {$C$};
  \node[font=\bfseries] at (0,4.65) {$X$};
  \node[font=\bfseries] at (4.2,1.05) { };
  \node[align=center] at (0,-3.55)
    {Colored edges are just used for better visualization.};
\end{tikzpicture}
\caption{The graph of split-deletion distance one used in the hardness reduction for an
illustrative RXC3 instance, with PID values shown in red. The displayed
PID-function corresponds to the exact cover $\{F_2,F_4\}$. Clique edges inside
$C$ are suppressed. The black edges encode set membership, the blue edges join
$w$ to every set vertex and join $s$ to all test and set vertices. Deleting
$s$ leaves the split partition $(C,X)$.}
\label{fig:split-plus-one-full}
\end{figure}

The graph $G-s$ has the split partition $(C,X)$. For any $F\in\mathcal F$,
$u\in F$, and $t\in T$, the vertices $s,t,k_u,x_F$ induce a four-cycle in
that order. Hence $G$ is not split, and therefore
$\operatorname{sd}(G)=1$.
We claim that
\begin{equation}
  (U,\mathcal F)\text{ has an exact cover}
  \quad\Longleftrightarrow\quad
  \gamma_I^P(G)\leq b.
  \label{eq:split-plus-one-equivalence}
\end{equation}

Suppose first that $\mathcal F'\subseteq\mathcal F$ is an exact cover of
$U$.  Necessarily $|\mathcal F'|=q$.  Define a function $f$ by
$f(s)=f(w)=1$,
\[
  f(x_F)=
  \begin{cases}
    1, & F\in\mathcal F',\\
    0, & F\notin\mathcal F',
  \end{cases}
\]
and assign label zero to every vertex of $K_U\cup T$.  A test vertex sees
one unit from $s$ and one from $w$.  An element vertex $k_u$ sees one unit
from $w$ and one from the unique selected set containing $u$.  Finally, an
unselected set vertex sees one unit from each of $s$ and $w$.  Hence $f$ is
a PID-function, and
$\omega(f)=2+|\mathcal F'|=q+2=b$.

Conversely, suppose that $G$ has a PID-function $f$ with
$\omega(f)\leq b$.  Define
$c=f(C)$ and $x=f(X)=\sum_{F\in\mathcal F}f(x_F)$.
Since $|T|=L>b$, at least one test vertex $t\in T$ has label zero.  Its
neighbors are $s$ and all vertices of $C\setminus\{t\}$.  Since
$f(t)=0$, the PID condition at $t$ gives
\begin{equation}
  f(s)+c=2.
  \label{eq:split-plus-one-forcing}
\end{equation}
We distinguish the three possible labels of $s$.

\smallskip
\noindent\emph{Case 1: $f(s)=0$.}
Equation~\eqref{eq:split-plus-one-forcing} gives $c=2$.  The PID condition
at $s$, whose neighborhood is $T\cup X$, gives
\begin{equation}
  f(T)+x=2.
  \label{eq:split-plus-one-s-zero}
\end{equation}
At most two element vertices are positive because their labels are included
in the total clique weight $c=2$.  Hence every set
$F\in\mathcal F$ contains an element $u$ for which $f(k_u)=0$.  The
neighbors of $k_u$ inside $C$ have total label $c=2$, so its PID condition
forces
$\sum_{F'\in\mathcal F:\,u\in F'}f(x_{F'})=0$.
In particular, $f(x_F)=0$.  Since this holds for every $F$, we obtain
$x=0$.

It now follows from \eqref{eq:split-plus-one-s-zero} that $f(T)=2$.
Because $f(C)=2$, the vertices $w$ and all element vertices have label
zero.  Every set vertex is also zero, but all its neighbors---namely $s$,
$w$, and its three element vertices---are zero.  This contradicts the PID
condition at the set vertex.  Therefore this case is impossible.

\smallskip
\noindent\emph{Case 2: $f(s)=2$.}
Equation~\eqref{eq:split-plus-one-forcing} gives $c=0$.  Thus every vertex
of $C$ is zero.  For each $u\in U$, the PID condition at $k_u$ becomes
$\sum_{F\in\mathcal F:\,u\in F}f(x_F)=2$.
Summing over $U$ and using $|F|=3$ for every $F\in\mathcal F$ yields
\[
  6q
  =\sum_{u\in U}\sum_{\substack{F\in\mathcal F\\u\in F}}f(x_F)
  =3\sum_{F\in\mathcal F}f(x_F)
  =3x.
\]
Consequently $x=2q$, and therefore
$\omega(f)\geq f(s)+x=2+2q>q+2=b$,
contrary to the assumed weight bound.

\smallskip
\noindent\emph{Case 3: $f(s)=1$.}
Equation~\eqref{eq:split-plus-one-forcing} gives $c=1$.  Hence exactly one
vertex $h\in C$ has label one, while every other vertex of $C$ has label
zero.

We first show that $h$ is not an element vertex.  Suppose that
$h=k_y$ for some $y\in U$.  Then $w$ is zero.  Since $w$ is adjacent to
every other vertex of $C$ and to every vertex of $X$, its PID condition
gives
$2=c+x=1+x$,
so $x=1$.  For every $u\in U\setminus\{y\}$, the zero-labeled vertex
$k_u$ receives one unit from $C$ and must receive one further unit from its
set neighbors.  Hence
$\sum_{F\in\mathcal F:\,u\in F}f(x_F)=1$.
Because $x=1$, precisely one set vertex has label one.  It is adjacent to
only three element vertices and therefore cannot supply one unit to all
$3q-1\geq5$ vertices in $U\setminus\{y\}$.  This is a contradiction.

Thus $h\in\{w\}\cup T$, and every element vertex has label zero.  For each
$u\in U$, the PID condition at $k_u$ now gives
\begin{equation}
  2=c+\sum_{\substack{F\in\mathcal F\\u\in F}}f(x_F)
   =1+\sum_{\substack{F\in\mathcal F\\u\in F}}f(x_F).
  \label{eq:split-plus-one-exact-cover}
\end{equation}
Therefore
$\sum_{F\in\mathcal F:\,u\in F}f(x_F)=1$ for every $u\in U$.
Let
$\mathcal F'=\{F\in\mathcal F:f(x_F)>0\}$.
Equation~\eqref{eq:split-plus-one-exact-cover} implies that every positive
set vertex has label one and that every element belongs to exactly one member
of $\mathcal F'$.  Hence $\mathcal F'$ is an exact cover of $U$.

The three cases prove the reverse implication in
\eqref{eq:split-plus-one-equivalence}.  The construction has polynomial size,
which completes the reduction.
\qed
\end{proof}

The following corollary \ref{cor:split-distance-para-np} is a direct consequence of Theorem \ref{thm:split-plus-one-hardness}.
\begin{corollary}
\label{cor:split-distance-para-np}
\textnormal{\textsc{PID}} parameterized by deletion distance to split graphs is para-NP-hard unless $\mathrm{P}=\mathrm{NP}$.
\end{corollary}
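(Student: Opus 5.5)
The plan is to obtain the corollary directly from Theorem~\ref{thm:split-plus-one-hardness}, using the standard characterization of para-NP-hardness: a parameterized problem is para-NP-hard if it is NP-hard already for a single fixed value of the parameter.

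\emph{Step 1.} I would fix the parameter value $\operatorname{sd}(G)=1$. Theorem~\ref{thm:split-plus-one-hardness} shows that the slice of \textsc{PID} on instances with $\operatorname{sd}(G)=1$ is NP-complete.

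\emph{Step 2.} I would argue by contradiction. Suppose \textsc{PID} parameterized by split-deletion distance admitted an algorithm running in time $h(k)\cdot n^{O(1)}$, or even $n^{g(k)}$, where $k=\operatorname{sd}(G)$ and $h,g$ are computable. In particular, this covers fixed-parameter tractable and XP algorithms.

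\emph{Step 3.} On the graphs produced by the reduction of Theorem~\ref{thm:split-plus-one-hardness}, we have $k=1$, so the running time becomes $h(1)\cdot n^{O(1)}$, which is a polynomial. This would decide RXC3 in polynomial time and hence give $\mathrm{P}=\mathrm{NP}$.

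The one subtle point is whether the algorithm needs a split deletion set $S$ as part of its input, or must compute $\operatorname{sd}(G)$ itself. This causes no difficulty. In the constructed instances the deletion set $\{s\}$ is known explicitly from the reduction, so it can simply be supplied. Alternatively, split graphs are recognizable in linear time, so trying every single vertex for deletion finds a valid $S$ with $|S|=1$ in polynomial time.

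Thus there is no real obstacle. The only care needed is to state that the parameter is bounded by the constant $1$ on the image of a polynomial-time reduction from an NP-complete problem, which is exactly the definition of para-NP-hardness, conditional on $\mathrm{P}\neq\mathrm{NP}$.
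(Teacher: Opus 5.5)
Your proposal is correct and is the same argument the paper intends: the paper gives no separate proof and calls the corollary a direct consequence of Theorem~\ref{thm:split-plus-one-hardness}, i.e.\ NP-hardness on the single parameter slice $\operatorname{sd}(G)=1$. You spell out the standard reasoning (an FPT or XP algorithm would be polynomial on that slice), and your remark that the deletion set $\{s\}$ can be supplied or found by linear-time split recognition correctly handles the only potential subtlety.
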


\section{Complexity on $K_{1,r}$-Split Graphs}
\label{sec:k11-hardness}

In this section we consider a different
extension of split graphs obtained by attaching private leaves to independent-side
vertices. We prove hardness for strict $K_{1,1}$-split graphs and then
classify the remaining strict and non-strict star classes. Figure~\ref{fig:k11-reduction} illustrates the two
gadgets used in the reduction. Throughout this section, for a function $f$ on
a graph $G=(K\cup I\cup J,E)$, write
$f_K=\sum_{v\in K}f(v)$.
\deferredproofnote
\begin{firstdeferredlemma}\label{lem:strict-recognition}
For every fixed $r\geq1$, strict $K_{1,r}$-split graphs can be recognized
and a defining partition can be recovered in $O(n+m)$ time, where
$n=|V(G)|$ and $m=|E(G)|$. All defining partitions of a connected strict
instance have the same values of $|K|$, $|I|$, and $|J|$.
\end{firstdeferredlemma}

\begin{deferredlemma}\label{lem:no-two-clique}
Let $G=(K\cup I\cup J,E)$ be a strict $K_{1,1}$-split graph with
$I\neq\emptyset$, and let $f$ be a minimum PID-function of $G$. Then $f(v)\leq1$ for every $v\in K$.
\end{deferredlemma}

We reduce from Restricted Exact Cover by 3-Sets (RXC3). An instance consists
of a set $X$ with $|X|=3q$ and a family
$\mathcal C=\{S_1,\ldots,S_{3q}\}$ of three-element subsets of $X$ such that
every element occurs in exactly three sets. The question is whether
$\mathcal C$ contains an exact cover of $X$. RXC3 is NP-complete
\cite{gonzalez1985clustering}. We may assume $q\geq3$, since excluding the
finitely many smaller instances does not affect NP-hardness.

\begin{deferredtheorem}\label{thm:k11-hard}
PID is NP-complete on strict $K_{1,1}$-split graphs.
\end{deferredtheorem}

As a direct consequence of Theorem \ref{thm:k11-hard} we have the complete
classification for the non-strict $K_{1,r}$-split.

\begin{corollary}\label{thm:kr-hard}
For every fixed integer $r\geq1$, PID is NP-complete on non-strict $K_{1,r}$-split
graphs.
\end{corollary}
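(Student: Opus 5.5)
The plan is to obtain Corollary~\ref{thm:kr-hard} from Theorem~\ref{thm:k11-hard} by class containment and to verify membership in NP separately. Membership is immediate for any graph class. A certificate is the function $f:V(G)\rightarrow\{0,1,2\}$ itself. In linear time we check that $\omega(f)\leq k$, and that $\sum_{v\in N_G(u)}f(v)=2$ for every vertex $u$ with $f(u)=0$. This check does not need a defining partition.

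For NP-hardness, fix $r\geq1$. We show that every strict $K_{1,1}$-split graph is a non-strict $K_{1,r}$-split graph. Let $G=(K\cup I\cup J,E)$ be strict $K_{1,1}$-split. Then $G[I\cup J]$ is a disjoint union of copies of $K_{1,1}$, so each component is a star $K_{1,p}$ with $p=1\leq r$. The remaining requirements are the same in both definitions. These are that $K$ is a clique, each leaf in $J$ is adjacent only to its center, only centers have neighbors in $K$, and every center has at least one neighbor in $K$. Hence the same partition $(K,I,J)$ witnesses that $G$ is a non-strict $K_{1,r}$-split graph.

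The reduction of Theorem~\ref{thm:k11-hard} is a polynomial-time map from RXC3 to PID instances whose graphs are strict $K_{1,1}$-split. Its outputs therefore already lie in the non-strict $K_{1,r}$-split class, so the identity map on instances transfers NP-hardness. If the problem on the larger class is given together with a defining partition, we pass the one produced by the reduction. Alternatively, it can be recovered via Lemma~\ref{lem:strict-recognition}.

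The only point needing care is a definitional one rather than a real obstacle. We must make sure the non-strict definition does not forbid stars with exactly one leaf when $r>1$, and does not require some star with $r$ leaves. The definition allows $0\leq p\leq r$, so neither restriction applies, and the containment holds for every fixed $r\geq1$.
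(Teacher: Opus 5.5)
Your proposal is correct and follows the paper's route exactly: NP membership is checked directly, and NP-hardness follows from Theorem~\ref{thm:k11-hard} because every strict $K_{1,1}$-split graph, with the same partition $(K,I,J)$, is a non-strict $K_{1,r}$-split graph for every $r\geq1$. Your extra check that the non-strict definition allows stars with $0\leq p\leq r$ leaves, and so does not require any star to have exactly $r$ leaves, is the one point the paper's one-line justification leaves implicit.
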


\begin{deferredtheorem}\label{thm:strict-stars}
On strict $K_{1,r}$-split graphs, PID is NP-complete for $r=\{1,2\}$.
For every fixed $r\geq3$, a minimum PID-function can be computed in
$O(n+m)$ time, where $n=|V(G)|$ and $m=|E(G)|$, without a balance
assumption.
\end{deferredtheorem}

This does not contradict Theorem~\ref{thm:kr-hard}, since the non-strict
classes still allow stars with only one or two leaves.

\section{Approximation and Inapproximability}
\label{sec:approximation}

We will study approximation on strict $K_{1,r}$-split graphs later in this
section. First, we return to graphs at deletion distance one from split
graphs, that is, graphs $G$ with $\operatorname{sd}(G)=1$. By adding more
test vertices to the construction of
Theorem~\ref{thm:split-plus-one-hardness}, we show that even a
$(2-\varepsilon)$-approximation is not possible in polynomial time for any
fixed $0<\varepsilon<1$, unless $\mathrm{P}=\mathrm{NP}$.

\begin{deferredtheorem}\label{thm:split-plus-one-inapprox}
For every fixed $0<\varepsilon<1$, minimum PID has no polynomial-time $(2-\varepsilon)$-approximation algorithm on graphs $G$
with $\operatorname{sd}(G)=1$, unless $\mathrm{P}=\mathrm{NP}$.
\end{deferredtheorem}

Since the class with $\operatorname{sd}(G)=1$ is contained in the class
with $\operatorname{sd}(G)\leq d$ for every fixed $d\geq1$,
Theorem~\ref{thm:split-plus-one-inapprox} gives the same approximation lower
bound on each such class. In particular, none admits a PTAS unless
$\mathrm{P}=\mathrm{NP}$. This lower bound does not exclude a
$2$-approximation or a larger constant-factor approximation.

The strict cases $r=1,2$ are NP-complete, whereas strict $r\geq3$ admits
an exact algorithm by Theorem~\ref{thm:strict-stars}. We prove that strict
$r=1$ has no PTAS, and then give balanced approximation guarantees for
$r=1,2$. The no-PTAS reduction uses bounded three-dimensional matching. An instance
consists of three pairwise disjoint sets $X,Y,Z$, each of cardinality $q$, and
a family $\mathcal T\subseteq X\times Y\times Z$ of $m$ triples. Every element
of $U=X\mathbin{\dot\cup}Y\mathbin{\dot\cup}Z$ occurs in at least one and at
most $B$ triples, where $B$ is a fixed constant. The objective is to maximize
the cardinality $\nu(\mathcal T)$ of a subfamily of pairwise disjoint triples.
Maximum Bounded Three-Dimensional Matching is MAX SNP-complete
\cite{kann1991}. For the approximation reduction we use the stronger
perfect-matching gap of Petrank~\cite[Theorem~4.4]{petrank1994gap}: for a
fixed $B\geq3$
and some rational constant $0<\delta<1$, it is NP-hard to distinguish instances with
$\nu(\mathcal T)=q$ from instances with
$\nu(\mathcal T)\leq(1-\delta)q$. The hard instances have no isolated
elements and satisfy the above occurrence bound. Counting incidences gives
\begin{equation}
 q\leq m\leq Bq. \label{eq:bd3dm-size}
\end{equation}

\begin{deferredtheorem}\label{thm:no-ptas}
Minimum PID on strict $K_{1,1}$-split graphs does not admit a PTAS unless $\mathrm{P}=\mathrm{NP}$. This holds even when $|K|\leq|I|$.
\end{deferredtheorem}

A direct consequence of Theorem \ref{thm:no-ptas} rules out a PTAS
on each of these non-strict classes, even when $|K|\leq|I|$.
For strict graphs, Theorem~\ref{thm:strict-stars} solves $r\geq3$ exactly. Under the balance
assumption these cases admit the following simple guarantees.
\begin{deferredproposition}\label{prop:balanced-approx}
Let $G$ be a strict $K_{1,r}$-split graph with $r\in\{1,2\}$ admitting a
partition $(K,I,J)$ with $|K|\leq c|I|$ for a fixed constant $c$.
There is a linear-time $(c+2)$-approximation for $r=1$ and a linear-time
$(1+c/2)$-approximation for $r=2$.
\end{deferredproposition}

Under $|K|\leq|I|$, Proposition~\ref{prop:balanced-approx} gives ratio three for strict $r=1$ and ratio $3/2$ for strict $r=2$.
Theorem~\ref{thm:no-ptas} rules out a PTAS for the former unless
$\mathrm{P}=\mathrm{NP}$. For strict $r\geq3$, Theorem~\ref{thm:strict-stars} gives ratio one without balance.
\section*{Declaration of AI Use}
An OpenAI AI assistant (Prism) was used during manuscript preparation to assist with language editing, LaTeX formatting, idea formulation, organization, and checking of mathematical arguments. The authors have verified the correctness of the results and take full responsibility for the integrity of the manuscript.
\begingroup
\hbadness=2000

\endgroup
\clearpage
\appendix
\renewcommand{\theHsection}{\Alph{section}}
\section{Appendix}\label{app:deferred-proofs}

\subsection{Proof of Lemma~\ref{lem:strict-recognition}}\label{app:strict-recognition}
\noindent\textbf{Lemma~\ref{lem:strict-recognition}.}
\textit{For every fixed $r\geq1$, strict $K_{1,r}$-split graphs can be recognized
and a defining partition can be recovered in $O(n+m)$ time, where
$n=|V(G)|$ and $m=|E(G)|$. All defining partitions of a connected strict
instance have the same values of $|K|$, $|I|$, and $|J|$.}

\begin{proof}
We consider nonempty connected graphs, as stipulated in the preliminaries.
A proposed partition can be checked in $O(n+m)$ time by scanning adjacency
lists: check that the parts are disjoint and exhaustive, that $K$ is a
nonempty clique, that $I$ is independent, that every center has exactly $r$
neighbors in $J$ and at least one in $K$, and that each vertex of $J$ has
degree one with its neighbor in $I$. The clique test counts, for each vertex
of $K$, its neighbors in $K$. Thus it suffices to produce a constant number
of candidate partitions.

First handle complete graphs with $K=V(G)$ and $I=J=\emptyset$. Also handle
$G=K_{1,r+1}$: choose one leaf as the sole clique vertex, its neighbor as the
sole center, and the other $r$ leaves as $J$. Both cases can be detected in
linear time. For the remaining graphs let $L$ be the set of degree-one
vertices and let $\ell(v)=|N(v)\cap L|$; compute these quantities in one scan.

In any valid partition every center has degree at least $r+1\geq2$, so every
vertex of $L$ belongs to $J$ or $K$. If $|K|\geq3$, no clique vertex has
degree one. If $|K|=1$, a degree-one clique vertex implies $|I|=1$ and gives
the single-star case already handled. If $|K|=2$ and $I\neq\emptyset$, at
most one clique vertex $b$ has degree one. In that event $b$ has no center
neighbors, and every center is adjacent to the other clique vertex $a$.
These exhaust the possible degree-one clique vertices.

For $r\geq2$, outside the two initial cases every true center has exactly
$r$ neighbors in $L$, while each clique vertex has at most one. Hence the
only possible center set is
$C=\{v:\ell(v)\geq r\}$.
Take $I=C$, $J=L\cap N(C)$, and $K=V(G)\setminus(I\cup J)$, and verify this
candidate. It recovers every valid instance; if verification fails, reject.

For $r=1$, put $C=N(L)$ and first test the candidate
$(K,I,J)=(V(G)\setminus(C\cup L),C,L)$. It succeeds whenever no clique
vertex has degree one. Otherwise the only possible valid form has
$K=\{a,b\}$ as described above. Then $C=I\cup\{a\}$,
$L=J\cup\{b\}$, every vertex of $C$ has exactly one neighbor in $L$, and
$G[C]$ is the star with center $a$ and leaf set $I$. Test whether $G[C]$ is
a star with at least one edge. If it is, there is one possible center $a$,
or two choices when $G[C]=K_2$. For each choice having a unique neighbor
$b$ in $L$, test
$(K,I,J)=(\{a,b\},C\setminus\{a\},L\setminus\{b\})$.
Reject if no candidate passes verification. The exceptional form above shows
that every valid instance is found. There are at most three verification
passes, and all candidate construction and star tests use $O(n+m)$ time.

Finally, complete graphs have only the clique-only partition sizes, and
$K_{1,r+1}$ has $|K|=|I|=1$ and $|J|=r$. Outside these cases, for $r\geq2$
the sets are forced by the construction. For $r=1$, either $L=J$ and $C=I$
in every valid partition, or the exceptional form forces $|K|=2$,
$|I|=|C|-1$, and $|J|=|L|-1$. The possible choice of an endpoint when
$G[C]=K_2$ changes vertices but not part sizes. Thus the sizes are invariant.
\qed
\end{proof}

\subsection{Proof of Lemma~\ref{lem:no-two-clique}}\label{app:no-two-clique}
\noindent\textbf{Lemma~\ref{lem:no-two-clique}.}
\textit{Let $G=(K\cup I\cup J,E)$ be a strict $K_{1,1}$-split graph with
$I\neq\emptyset$, and let $f$ be a minimum PID-function of $G$. Then $f(v)\leq1$ for every $v\in K$.}

\begin{proof}
The assumption $I\neq\emptyset$ excludes clique-only instances. For each
$i\in I$, denote its unique neighbor in $J$ by $i'$. Suppose that
$f(w)=2$ for some $w\in K$.

Assume first that every vertex of $K$ is positive. No center in
$N_G(w)\cap I$ can have label zero: if $f(i)=0$, then the contribution two
from $w$ forces $f(i')=0$, but the zero-labeled leaf $i'$ has only the
zero-labeled neighbor $i$. Hence changing $f(w)$ from two to one preserves all
PID constraints and decreases the weight, a contradiction.

Now suppose that some $v\in K$ has label zero. The PID condition at $v$ gives
\begin{equation}
 f(x)=0\quad\text{for every }x\in K\setminus\{w\}
 \text{ and every }x\in N_G(v)\cap I.                 \label{eq:k11-zero}
\end{equation}
Every pair $\{i,i'\}$ has total label at least two. Indeed, this is immediate
when $f(i)=2$; if $f(i)=1$, then $f(i')\geq1$; and if $f(i)=0$, then $i$ is
not adjacent to $w$, so \eqref{eq:k11-zero} forces all its clique neighbors
to have label zero and therefore $f(i')=2$. Consequently,
\begin{equation}
 \omega(f)\geq 2+2|I|.                                \label{eq:k11-pair-bound}
\end{equation}

If a zero-labeled clique vertex $v$ has a neighbor in $I$, define $g$ by
$g(w)=g(v)=1$ and, for every $i\in N_G(v)\cap I$, set $g(i')=1$; retain all
other labels. Each such center $i$ is zero, is not adjacent to $w$, and had
$f(i')=2$. Thus its neighbor sum remains two. At every remaining zero-labeled
clique vertex, the contribution two from $w$ is replaced by one from $w$ and
one from $v$. Hence $g$ is a PID-function of smaller weight, a contradiction.

It remains that no zero-labeled clique vertex has a neighbor in $I$. Since
every center has a clique neighbor, every center is adjacent to $w$ and to no
other clique vertex. Choose a zero-labeled $v\in K$ and define $h$ by
\[
 h(w)=h(v)=1,\quad h(x)=0\ (x\in K\setminus\{w,v\}),
 \quad h(i)=0,\quad h(i')=1\ (i\in I).
\]
Then every zero-labeled clique vertex receives one from each of $w$ and $v$,
and every center receives one from $w$ and one from its leaf. Thus $h$ is a
PID-function with $\omega(h)=|I|+2<2|I|+2\leq\omega(f)$, again a
contradiction.
\qed
\end{proof}

\subsection{Proof of Theorem~\ref{thm:k11-hard}}\label{app:k11-hard}
\noindent\textbf{Theorem~\ref{thm:k11-hard}.}
\textit{PID is NP-complete on strict $K_{1,1}$-split graphs.}

\begin{proof}
Membership in NP is immediate. Let $(X,\mathcal C)$ be an RXC3 instance, where
$X=\{x_1,\ldots,x_{3q}\}$ and
$\mathcal C=\{S_1,\ldots,S_{3q}\}$. Construct a graph
$G=(K\cup I\cup J,E)$ as follows.
\begin{itemize}
    \item Let $K=X\cup\{z\}$ and make $K$ a clique.
    \item For each $j\in\{1,\ldots,3q\}$, introduce two center--leaf pairs
    $s_j u_j$ and $w_j y_j$, where
    \[
       I=\{s_j,w_j:1\leq j\leq3q\},\qquad
       J=\{u_j,y_j:1\leq j\leq3q\}.
    \]
    \item Join $s_j$ to $x_i$ exactly when $x_i\in S_j$.
    \item Join every $w_j$ to $z$. For $j\in\{1,2,3\}$, also join $w_j$
    to every vertex of $X$.
\end{itemize}
The graph is a strict $K_{1,1}$-split graph and has $15q+1$ vertices. Set
$B=9q+1$. Figure~\ref{fig:k11-reduction} shows the two types of gadgets, and
Figure~\ref{fig:k11-full-reduction} shows the complete reduced graph for a
small RXC3 instance.

\begin{figure}[t]
\centering
\begin{tikzpicture}[scale=0.9,transform shape,node distance=9mm and 14mm]
  \begin{scope}[xshift=-3.7cm]
    \node[vertexdot] (xa) at (-1.7,1) {};
    \node[vertexdot] (xb) at (-1.7,0) {};
    \node[vertexdot] (xc) at (-1.7,-1) {};
    \node[vertexdot] (sj) at (0.2,0) {};
    \node[vertexdot] (uj) at (1.7,0) {};
    \node[vertexlabel,left=1mm of xa] {$x_a$};
    \node[vertexlabel,left=1mm of xb] {$x_b$};
    \node[vertexlabel,left=1mm of xc] {$x_c$};
    \node[vertexlabel,below=1mm of sj] {$s_j$};
    \node[vertexlabel,right=1mm of uj] {$u_j$};
    \node[pidvalue,left=6mm of xa] {$0$};
    \node[pidvalue,left=6mm of xb] {$0$};
    \node[pidvalue,left=6mm of xc] {$0$};
    \node[pidvalue,above=1mm of sj] {$1$};
    \node[pidvalue,right=6mm of uj] {$1$};
    \draw (xa)--(sj) (xb)--(sj) (xc)--(sj) (sj)--(uj);
    \node[above=5mm of sj] {$S_j=\{x_a,x_b,x_c\}$};
    \node[below=7mm of sj] {(a) set pair};
  \end{scope}

  \begin{scope}[xshift=3.3cm]
    \node[vertexdot] (z) at (-1.6,1.2) {};
    \node[vertexdot] (xone) at (-1.6,0) {};
    \node[vertexdot] (xlast) at (-1.6,-1.2) {};
    \node[vertexlabel,left=1mm of z] {$z$};
    \node[vertexlabel,left=1mm of xone] {$x_1$};
    \node[vertexlabel,left=1mm of xlast] {$x_{3q}$};
    \node[pidvalue,left=8mm of z] {$1$};
    \node[pidvalue,left=8mm of xone] {$0$};
    \node[pidvalue,left=8mm of xlast] {$0$};
    \node at (-1.6,-0.6) {$\vdots$};
    \node[draw,dashed,rounded corners,fit=(z)(xone)(xlast),inner sep=6mm,
      label=above:{clique $K$}] {};
    \node[vertexdot] (wj) at (0.5,0) {};
    \node[vertexdot] (yj) at (2.1,0) {};
    \node[vertexlabel,below=1mm of wj] {$w_j$};
    \node[vertexlabel,right=1mm of yj] {$y_j$};
    \node[pidvalue,above=1mm of wj] {$0$};
    \node[pidvalue,right=6mm of yj] {$1$};
    \draw (z)--(wj) (wj)--(yj);
    \draw[dashed] (xone)--(wj) (xlast)--(wj);
    \node[font=\scriptsize] at (-0.35,-1.05) {$j\leq3$};
    \node at (0.5,-1.75) {(b) auxiliary pair};
  \end{scope}
\end{tikzpicture}
\caption{Hardness construction for strict $K_{1,1}$-split graphs, with PID
values shown in red. The set pair is displayed in its selected mode. The
center $s_j$ is adjacent to the three element vertices representing $S_j$.
Every $w_j$ is adjacent to $z$; for $j\leq3$, it is also adjacent to every
vertex of $X$. Each center has one private leaf.}
\label{fig:k11-reduction}
\end{figure}

\begin{figure}[t]
\centering
\begin{tikzpicture}[scale=0.82,transform shape,
  every node/.style={font=\scriptsize}]
  \node[vertexdot] (Kz) at (-4.8,3.6) {};
  \node[vertexdot] (Kx1) at (-4.8,2.7) {};
  \node[vertexdot] (Kx2) at (-4.8,1.8) {};
  \node[vertexdot] (Kx3) at (-4.8,0.9) {};
  \node[vertexdot] (Kx4) at (-4.8,0) {};
  \node[vertexdot] (Kx5) at (-4.8,-0.9) {};
  \node[vertexdot] (Kx6) at (-4.8,-1.8) {};
  \node[vertexlabel,left=1mm of Kz] {$z$};
  \node[vertexlabel,left=1mm of Kx1] {$x_1$};
  \node[vertexlabel,left=1mm of Kx2] {$x_2$};
  \node[vertexlabel,left=1mm of Kx3] {$x_3$};
  \node[vertexlabel,left=1mm of Kx4] {$x_4$};
  \node[vertexlabel,left=1mm of Kx5] {$x_5$};
  \node[vertexlabel,left=1mm of Kx6] {$x_6$};
  \node[pidvalue,left=9mm of Kz] {$1$};
  \foreach \x in {Kx1,Kx2,Kx3,Kx4,Kx5,Kx6}
    \node[pidvalue,left=9mm of \x] {$0$};
  \node[draw,dashed,rounded corners,
    fit=(Kz)(Kx1)(Kx2)(Kx3)(Kx4)(Kx5)(Kx6),inner sep=5mm,
    label=above:{ }] {};

  \foreach \n/\lab/\val/\yy in {
    w1/w_1/0/3.6,w2/w_2/0/3.05,w3/w_3/0/2.5,
    w4/w_4/0/1.95,w5/w_5/0/1.4,w6/w_6/0/0.85,
    s1/s_1/0/0.05,s2/s_2/1/-0.5,s3/s_3/0/-1.05,
    s4/s_4/1/-1.6,s5/s_5/0/-2.15,s6/s_6/0/-2.7}
  {
    \node[vertexdot] (I\n) at (0,\yy) {};
    \node[vertexlabel,above left=0mm and 0.8mm of I\n] {$\lab$};
    \node[pidvalue,above right=0mm and 0.8mm of I\n] {$\val$};
  }

  \foreach \n/\lab/\val/\yy in {
    y1/y_1/1/3.6,y2/y_2/1/3.05,y3/y_3/1/2.5,
    y4/y_4/1/1.95,y5/y_5/1/1.4,y6/y_6/1/0.85,
    u1/u_1/2/0.05,u2/u_2/1/-0.5,u3/u_3/2/-1.05,
    u4/u_4/1/-1.6,u5/u_5/2/-2.15,u6/u_6/2/-2.7}
  {
    \node[vertexdot] (J\n) at (4.2,\yy) {};
    \node[vertexlabel,right=1mm of J\n] {$\lab$};
    \node[pidvalue,right=9mm of J\n] {$\val$};
  }

  \foreach \a/\b in {w1/y1,w2/y2,w3/y3,w4/y4,w5/y5,w6/y6,
    s1/u1,s2/u2,s3/u3,s4/u4,s5/u5,s6/u6}
    \draw (I\a)--(J\b);

  \foreach \w in {Iw1,Iw2,Iw3}
    \foreach \x in {Kz,Kx1,Kx2,Kx3,Kx4,Kx5,Kx6}
      \draw[blue!55] (\x)--(\w);
  \foreach \w in {Iw4,Iw5,Iw6}
    \draw[blue!55] (Kz)--(\w);

  \draw (Kx1)--(Is1) (Kx2)--(Is1) (Kx3)--(Is1);
  \draw (Kx1)--(Is2) (Kx2)--(Is2) (Kx4)--(Is2);
  \draw (Kx1)--(Is3) (Kx5)--(Is3) (Kx6)--(Is3);
  \draw (Kx3)--(Is4) (Kx5)--(Is4) (Kx6)--(Is4);
  \draw (Kx2)--(Is5) (Kx4)--(Is5) (Kx5)--(Is5);
  \draw (Kx3)--(Is6) (Kx4)--(Is6) (Kx6)--(Is6);

  \node[font=\bfseries] at (-4.8,4.45) {$K$};
  \node[font=\bfseries] at (0,4.45) {$I$};
  \node[font=\bfseries] at (4.2,4.45) {$J$};
  \node[align=center] at (0,-3.5)
    {Blue edges are just used for better visualization.};
\end{tikzpicture}
\caption{The full strict $K_{1,1}$-split graph produced by the reduction for
an illustrative RXC3 instance with $X=\{x_1,\ldots,x_6\}$. The red PID values
correspond to the exact cover $\{S_2,S_4\}$. The clique edges inside $K$ are
suppressed to keep the drawing readable. The centers $s_1,\ldots,s_6$ encode
the six sets, while $w_1,w_2,w_3$ are adjacent to all of $K$ and
$w_4,w_5,w_6$ are adjacent only to $z$. Every center has its private leaf in
$J$.}
\label{fig:k11-full-reduction}
\end{figure}

Suppose first that $\mathcal C^*$ is an exact cover. Assign label one to $z$,
to every $s_j,u_j$ with $S_j\in\mathcal C^*$, and to every $y_j$. Assign
label two to $u_j$ when $S_j\notin\mathcal C^*$, and assign zero to all
remaining vertices. Every element vertex receives one from $z$ and one from
the unique selected set containing it; every zero-labeled set center receives
two from its leaf; and every auxiliary center receives one from $z$ and one
from its leaf. Hence this is a PID-function of weight $9q+1$.

Conversely, suppose that $G$ has a PID-function of weight at most $B$, and
choose such a function $f$ of minimum weight. By
Lemma~\ref{lem:no-two-clique}, every clique label is zero or one.

We first show that $f_K=1$. If $f_K>2$, then no clique vertex is zero, since a
zero-labeled clique vertex is adjacent to all other clique vertices and would
receive clique contribution greater than two. Thus every clique vertex has
label one. Each pair $\{s_j,u_j\}$ then has weight at least two, and the same
is true for the three pairs $\{w_j,y_j\}$ with $j\leq3$. Therefore
$\omega(f)\geq(3q+1)+2(3q)+6=9q+7>B$.
If $f_K=2$, exactly two clique vertices have label one. Any zero-labeled
clique vertex already receives two from the clique, so $w_1$ must be zero.
The center $w_1$ also receives two from the clique and therefore forces
$y_1=0$, but then the zero-labeled leaf $y_1$ has neighbor sum zero, a
contradiction. If $f_K=0$, every center--leaf pair has total label at least
two, giving $\omega(f)\geq12q>B$. Hence $f_K=1$.

Let $h$ be the unique clique vertex with label one. We claim that $h=z$. If
$h=x_t$ for some element vertex, the PID condition at $z$ gives
$\sum_{j}f(w_j)=1$. The auxiliary pairs then contribute at least $6q-3$: the
first three pairs contribute at least one each, and all pairs with $j\geq4$
contribute at least two. Exactly three sets contain $x_t$. Their set pairs
contribute at least one each, while every other set pair contributes at least
two, for a total of at least $6q-3$. Consequently,
$\omega(f)\geq1+(6q-3)+(6q-3)=12q-5>9q+1$,
where the last inequality uses $q\geq3$. Thus $f(z)=1$ and every element
vertex has label zero.

With these clique labels, every auxiliary pair has weight at least one and
every set pair has weight at least two. Hence
$\omega(f)\geq1+3q+2(3q)=9q+1$.
Equality must hold throughout. Therefore $f(w_j)=0$ and $f(y_j)=1$ for every
$j$, and every set pair has total label two. For each element $x_i$, the PID
condition now reads
$1+\sum_{j:x_i\in S_j}f(s_j)=2$.
Thus exactly one set center adjacent to $x_i$ has label one. It follows that
$\mathcal C^*=\{S_j:f(s_j)=1\}$
is an exact cover of $X$. This proves the equivalence and the theorem.
\qed
\end{proof}

\subsection{Proof of Theorem~\ref{thm:strict-stars}}\label{app:strict-stars}
\noindent\textbf{Theorem~\ref{thm:strict-stars}.}
\textit{On strict $K_{1,r}$-split graphs, PID is NP-complete for $r=1,2$.
For every fixed $r\geq3$, a minimum PID-function can be computed in
$O(n+m)$ time, where $n=|V(G)|$ and $m=|E(G)|$, without a balance
assumption.}

\begin{proof}
For $r=1$, use Theorem~\ref{thm:k11-hard}. For $r=2$, NP-completeness
is known from Pradhan, Banerjee, and Liu~\cite{PID_CSPB}.

Now let $r\geq3$. Recover a partition using
Lemma~\ref{lem:strict-recognition}, and put $n=|V(G)|$, $m=|E(G)|$,
$k=|K|$, and $t=|I|$. Every center must be positive: if it were zero,
all its leaves would be positive and contribute at least $r\geq3$ to its
neighborhood sum. A center of label two has minimum star cost two, attained
with all leaves zero. A center of label one has minimum star cost $r+1$,
attained with all leaves one. This normalization affects no zero vertex
other than leaves with a center of label two, whose constraints are satisfied.

For $v\in K$, compute $d(v)=|N(v)\cap I|$, and write
$s=\sum_{v\in K}f(v)$. If a clique vertex is zero, then $s\leq2$.
We compute four candidate weights, using $+\infty$ for an infeasible case.

\smallskip
\noindent\emph{All clique vertices positive.}
The minimum weight in this case is $W_+=k+2t$: label every clique vertex one,
every center two, and every leaf zero. Each positive clique vertex costs at
least one and each star costs at least two, so this is optimal within the case.

\smallskip
\noindent\emph{Clique weight $s=0$.}
Every clique vertex is zero and its center neighbors must sum to two.
Thus $d(v)\in\{1,2\}$ for every $v\in K$. A vertex with $d(v)=1$ forces
its unique center neighbor to two, while one with $d(v)=2$ forces both center
neighbors to one. Let $A_2$ and $A_1$ be the respective sets of forced centers.
If some $d(v)\notin\{1,2\}$ or $A_1\cap A_2\neq\emptyset$, the case is
infeasible. Otherwise label $A_1$ one and every other center two, giving
$W_0=2t+(r-1)|A_1|$.
These labels meet all constraints, and the forced labels show optimality.

\smallskip
\noindent\emph{Clique weight $s=1$.}
Let $w$ be the unique clique vertex labeled one. Every other clique vertex
$v$ must have $d(v)=1$, forcing its sole center neighbor to one. Hence $w$
is eligible exactly when it contains all exceptions to $d(v)=1$: if there
are no exceptions, every $w\in K$ is eligible; if there is one, only that
vertex is eligible; and if there are two or more, this case is infeasible.
Compute, for every $w\in K$,
$p(w)=|\{u\in I:N(u)\cap K=\{w\}\}|$.
Each center has a clique neighbor. The centers counted by $p(w)$ are
therefore precisely the unforced centers; give them label two, and give all
other centers label one. For the eligible set $Q\subseteq K$, the optimum is
$W_1=1+2t+(r-1)\bigl(t-\max_{w\in Q}p(w)\bigr)$,
with $W_1=+\infty$ if $Q=\emptyset$. Choose a maximizing $w$ and normalize
the leaves. Every zero clique vertex then receives one from $w$ and one
from its unique center neighbor.

\smallskip
\noindent\emph{Clique weight $s=2$.}
A zero clique vertex can have no center neighbor, since all centers are
positive. Thus $H=\{v\in K:d(v)>0\}$ must lie in the positive clique
support, which has size at most two. If $|H|>2$, the case is infeasible.
Otherwise set $W_2=2+2t$: for $|H|=2$ label its vertices one; for $|H|=1$
label its vertex two; and for $H=\emptyset$ label any clique vertex two.
Label all other clique vertices zero, all centers two, and all leaves zero.
The clique is nonempty, so this construction is always defined. Every zero
clique vertex sees exactly two from the clique and nothing from the centers.
The lower bounds on clique and star weights prove optimality in this case.

The cases are exhaustive (and may overlap when all clique vertices are
positive), so the optimum is $\min\{W_+,W_0,W_1,W_2\}$. The degrees $d(v)$,
forced-center sets, exception count, and values $p(w)$ are obtained by a
constant number of adjacency-list scans. The maximum over eligible vertices
is a single scan, not an enumeration of complete labelings. Store the winning
case and, when needed, its vertex $w$; reconstruct its center and leaf labels
in one further scan. Including partition recovery, the running time is
$O(n+m)$, with no balance assumption.
\qed
\end{proof}

\subsection{Proof of Theorem~\ref{thm:split-plus-one-inapprox}}\label{app:split-plus-one-inapprox}
\noindent\textbf{Theorem~\ref{thm:split-plus-one-inapprox}.}
\textit{For every fixed $0<\varepsilon<1$, minimum perfect Italian domination has
no polynomial-time $(2-\varepsilon)$-approximation algorithm on graphs $G$
with $\operatorname{sd}(G)=1$, unless $\mathrm{P}=\mathrm{NP}$.}

\begin{proof}
Fix $0<\varepsilon<1$. We reduce from Restricted Exact Cover by 3-Sets
(RXC3)~\cite{gonzalez1985clustering}. Let $U$ have $3q$ elements, and let
$\mathcal F$ consist of $3q$ three-element subsets of $U$, with every
element occurring in exactly three members of $\mathcal F$. We may assume
$q\geq2$ and $\varepsilon(q+2)>2$: if necessary, replace the RXC3 instance
by a fixed number of disjoint copies with distinct element names. This
preserves the occurrence restriction and the existence of an exact cover;
the number of copies depends only on the fixed constant $\varepsilon$.

Use the construction of Theorem~\ref{thm:split-plus-one-hardness}, changing
only the number of test vertices to $L=2q+3$. Explicitly, let
\[
 K_U=\{k_u:u\in U\},\qquad T=\{t_1,\ldots,t_L\},\qquad
 X=\{x_F:F\in\mathcal F\}.
\]
Make $C=\{w\}\cup K_U\cup T$ a clique and $X$ an independent set.
Add a vertex $s$ adjacent exactly to $T\cup X$. Join $w$ to every vertex
of $X$, and join $x_F$ to $k_u$ exactly when $u\in F$. No other edges
are added. The resulting graph is connected and has $8q+5$ vertices.
Deleting $s$ leaves the split partition $(C,X)$. For any $F\in\mathcal F$,
$u\in F$, and $t\in T$, the vertices $s,t,k_u,x_F$ induce a four-cycle
in that order. Thus $G$ is not split and $\operatorname{sd}(G)=1$.

We prove the gap
\begin{equation}
 \begin{aligned}
  (U,\mathcal F)\text{ has an exact cover}
    &\ \Longrightarrow\ \gamma_I^P(G)\leq q+2,\\
  (U,\mathcal F)\text{ has no exact cover}
    &\ \Longrightarrow\ \gamma_I^P(G)\geq2q+2.
 \end{aligned}
 \label{eq:split-plus-one-gap}
\end{equation}
For the first implication, label $s$ and $w$ one, label the $q$ set vertices
of an exact cover one, and label all other vertices zero. A zero test
vertex receives one from each of $s$ and $w$; a zero element vertex receives
one from $w$ and one from its selected set vertex; and an unselected set
vertex receives one from each of $s$ and $w$. This is a PID-function of
weight $q+2$, regardless of the enlarged number of test vertices.

For the second implication, suppose that $f$ is a PID-function with
$\omega(f)<2q+2$. Put $c=f(C)$ and $x=f(X)$. Since $|T|=2q+3$, at least
one test vertex is zero. Its PID condition gives
$f(s)+c=2$.
We show that such a function forces an exact cover.

\smallskip
\noindent\emph{Case 1: $f(s)=0$.}
Then $c=2$ and the PID condition at $s$ gives $f(T)+x=2$. At most two
element vertices are positive, so every triple $F$ contains a zero-labeled
element vertex. This vertex already receives two from $C$ and therefore
forces $f(x_F)=0$. Hence $x=0$, so $f(T)=2$ and $w$ and all element
vertices are zero. Every set vertex is then zero with neighborhood sum
zero, a contradiction.

\smallskip
\noindent\emph{Case 2: $f(s)=2$.}
Then $c=0$. Each element vertex requires
$\sum_{F\in\mathcal F:\,u\in F}f(x_F)=2$. Summing over $U$ gives $3x=6q$, because every
set has three elements. Thus
$\omega(f)\geq f(s)+x=2+2q$, contrary to the assumed strict inequality.

\smallskip
\noindent\emph{Case 3: $f(s)=1$.}
Then $c=1$, so exactly one vertex $h\in C$ has label one. If $h=k_y$
is an element vertex, then $f(w)=0$ and its PID condition gives $1+x=2$.
Exactly one set vertex consequently has label one. Each of the remaining
$3q-1\geq5$ zero element vertices requires one unit from its incident set
vertices, but a single set vertex has only three element neighbors. This
is impossible. Therefore $h\in\{w\}\cup T$, and all element vertices
are zero. Their PID conditions give
$\sum_{F\in\mathcal F:\,u\in F}f(x_F)=1$ for every $u\in U$.
Every positive set vertex has label one, and the corresponding triples
cover each element exactly once. They form an exact cover. This proves
the second implication of~\eqref{eq:split-plus-one-gap}.

Suppose that a polynomial-time $(2-\varepsilon)$-approximation algorithm
exists on the stated graph class. On a yes-instance, its feasible output
has weight at most
\[
 (2-\varepsilon)(q+2)
 =2q+4-\varepsilon(q+2)<2q+2.
\]
On a no-instance, every feasible output has weight at least $2q+2$.
Comparing the returned weight with $2q+2$ therefore decides RXC3 in
polynomial time. The construction has polynomial size, so this contradicts
$\mathrm{P}\ne\mathrm{NP}$ and proves the theorem.
\qed
\end{proof}

\subsection{Proof of Theorem~\ref{thm:no-ptas}}\label{app:no-ptas}
\noindent\textbf{Theorem~\ref{thm:no-ptas}.}
\textit{Minimum PID on strict $K_{1,1}$-split graphs does not admit a PTAS unless $\mathrm{P}=\mathrm{NP}$. This holds even when $|K|\leq|I|$.}

\begin{proof}
Start with an instance of the gap problem described above.
If necessary, first replace it by a fixed number of disjoint copies of the
\emph{matching instance}, using distinct element names. This multiplies
$q$, $m$, and $\nu(\mathcal T)$ by the same factor, preserves the occurrence
bound and both gap cases, and lets us assume $q>2B+1$.

Construct a strict $K_{1,1}$-split graph as follows. Let $K=U\cup\{z\}$ and
make $K$ a clique. For every triple $T\in\mathcal T$, add a center $s_T$ with
a private leaf, join $s_T$ to $z$, and join it to the three element vertices
of $T$. For every $x\in U$, add a private center $p_x$ with one leaf and join
$p_x$ to $z$ and $x$. Thus there are
 $N=m+3q$
center--leaf pairs.

The graph is connected because every center is adjacent to $z$. Also,
$|K|=3q+1\leq m+3q=|I|$ by~\eqref{eq:bd3dm-size}. We claim that
\begin{equation}
 \gamma_I^P(G)=1+m+6q-2\nu(\mathcal T).          \label{eq:gap-optimum}
\end{equation}
Let $\mathcal M$ be a matching of size $t$. Assign one to $z$, to every set
center representing a triple of $\mathcal M$, to the private center of every
element not covered by $\mathcal M$, and to every leaf. Assign zero elsewhere.
Every element vertex receives one from $z$ and one from exactly one selected
center. This PID-function has weight
 $1+N+t+(3q-3t)=1+m+6q-2t.$

For the reverse inequality, choose a minimum PID-function $f$. By
Lemma~\ref{lem:no-two-clique}, clique labels belong to $\{0,1\}$. If
$f_K=0$, all $N$ center--leaf pairs have weight at least two. If the unique
clique vertex of label one is an element $x$, at most $B+1$ centers are
adjacent to it, and hence
 $\omega(f)\geq 2N-B.$

If $f_K=2$ and $z$ is positive, the other positive clique vertex belongs to a
triple whose center is forced to zero by one of its two zero-labeled element
vertices; that center then receives two from the clique and yields an
impossible zero center--zero leaf pair. If $f_K=2$ and $z$ is zero, all centers
are zero, and at most $2B+2$ pairs can have weight one. Therefore,
 $\omega(f)\geq2N-2B.$

Finally, if $f_K>2$, every clique vertex is positive and every pair has weight
at least two, so $\omega(f)\geq3q+1+2N$.

By~\eqref{eq:bd3dm-size} and $q>2B+1$, the smallest of the first three
weight bounds satisfies
$(2N-2B)-(1+m+6q)=m-2B-1>0$,
and the all-positive-clique bound exceeds $1+m+6q$ by $m+3q>0$.
However, the construction from the empty matching gives a PID-function of
weight $1+m+6q$. Thus a minimum solution has $f(z)=1$ and every element
vertex has label zero. Each center has an element neighbor whose PID
constraint allows only one further unit beyond the contribution from $z$;
hence every center has label zero or one. A zero center must have a leaf of
label one, and a center of label one forces its leaf to be positive, with
label one optimal. Therefore every leaf has label one in a minimum solution.
The PID equation at each element requires exactly one incident center of
label one. The selected triple centers represent a matching, and precisely
the uncovered elements use their private centers. If $t$ triple centers are
selected, the weight is $1+m+6q-2t$. Taking a maximum matching proves
\eqref{eq:gap-optimum}.

Apply Petrank's gap with the same constants $B$ and $\delta$.
By~\eqref{eq:gap-optimum}, a perfect matching gives PID optimum
$A=1+m+4q$, whereas the no-case gives optimum at least $A+2\delta q$.
Since $m\leq Bq$ and $q\geq1$, we have $A\leq(B+5)q$. Set
$\varepsilon=\frac{\delta}{B+5}>0$.
A polynomial-time $(1+\varepsilon)$-approximation would return a feasible
PID-function of weight at most $A+\delta q$ in the yes-case. In the no-case,
every feasible PID-function has weight at least $A+2\delta q$. Comparing
the returned weight with $A+\tfrac32\delta q$ would distinguish the two
cases in polynomial time, a contradiction unless $\mathrm{P}=\mathrm{NP}$.
This proves the claimed inapproximability on the balanced subclass.
\qed
\end{proof}

\subsection{Proof of Proposition~\ref{prop:balanced-approx}}\label{app:balanced-approx}
\noindent\textbf{Proposition~\ref{prop:balanced-approx}.}
\textit{Let $G$ be a strict $K_{1,r}$-split graph with $r\in\{1,2\}$ admitting a
partition $(K,I,J)$ with $|K|\leq c|I|$ for a fixed constant $c$.
There is a linear-time $(c+2)$-approximation for $r=1$ and a linear-time
$(1+c/2)$-approximation for $r=2$.}

\begin{proof}
Use Lemma~\ref{lem:strict-recognition} to recover a partition in linear time.
Its part sizes equal those of any defining partition, so it satisfies
$|K|\leq c|I|$. Assign label one to every vertex of $K$, label two to every
center, and label zero to every leaf. This takes linear time and gives a
PID-function of weight $|K|+2|I|$. For $r=1$,
every center--leaf pair has total weight at least one in every PID-function, so
$\gamma_I^P(G)\geq|I|$. This gives ratio at most $c+2$. For $r=2$, every
star component has weight at least two: a center labeled two already has
weight two; a center labeled one forces all leaves to be positive; and a
zero-labeled center forces every leaf to be positive. Thus
$\gamma_I^P(G)\geq2|I|$, giving ratio at most $1+c/2$.
\qed
\end{proof}


\begin{thebibliography}{99}
\bibitem{roman}
ReVelle, C.; Rosing, K.
Defendens Imperium Romanum: A Classical Problem in Military Strategy.
{\em Am. Math. Mon.} {\bf 2000}, {\em 107}, 585--594.
\url{https://doi.org/10.1080/00029890.2000.12005243}.

\bibitem{pagourtzis2002server}
Pagourtzis, A.; Penna, P.; Schlude, K.; Steinhel, K.; Taylor, D.S.; Widmayer, P.
Server placements, Roman domination and other dominating set variants.
In \emph{Proceedings of the IFIP TCS Conference}, 2002; Vol. 271, pp. 280--291.

\bibitem{ghaffari_wsn}
Ghaffari, F.; Bahrak, B.; Shariatpanahi, S.P.
A novel approach to partial coverage in wireless sensor networks via the Roman dominating set.
{\em IET Netw.} {\bf 2022}, {\em 11}, 58--69.
\url{https://doi.org/10.1049/ntw2.12034}.

\bibitem{roman_domination_graphs}
Cockayne, E.J.; Dreyer, P.A., Jr.; Hedetniemi, S.M.; Hedetniemi, S.T.
Roman domination in graphs.
{\em Discret. Math.} {\bf 2004}, {\em 278}, 11--22.
\url{https://doi.org/10.1016/j.disc.2003.06.004}.

\bibitem{HENNING2018235}
Henning, M.A.; Klostermeyer, W.F.; MacGillivray, G.
Perfect Roman domination in trees.
{\em Discret. Appl. Math.} {\bf 2018}, {\em 236}, 235--245.
\url{https://doi.org/10.1016/j.dam.2017.10.027}.

\bibitem{italian_domination}
Chellali, M.; Haynes, T.W.; Hedetniemi, S.T.; McRae, A.A.
Roman $\{2\}$-domination.
{\em Discret. Appl. Math.} {\bf 2016}, {\em 204}, 22--28.
\url{https://doi.org/10.1016/j.dam.2015.11.013}.

\bibitem{varghese2021italian}
Varghese, J.; Anu, V.; Lakshmanan, A.
Italian domination and perfect Italian domination on Sierpiński graphs.
{\em J. Discrete Math. Sci. Cryptogr.} {\bf 2021}, {\em 24}, 1885--1894.
\url{https://doi.org/10.1080/09720529.2021.1933705}.

\bibitem{almulhim2024survey}
Almulhim, A.; Al Subaiei, B.; Mondal, S.R.
Survey on Roman $\{2\}$-domination.
{\em Mathematics} {\bf 2024}, {\em 12}, 2771.
\url{https://doi.org/10.3390/math12172771}.

\bibitem{perfect_italian_domination}
Haynes, T.W.; Henning, M.A.
Perfect Italian domination in trees.
{\em Discret. Appl. Math.} {\bf 2019}, {\em 260}, 164--177.
\url{https://doi.org/10.1016/j.dam.2019.01.038}.

\bibitem{EDK1K2}
Brandstädt, A.; Mosca, R.
On efficient domination for some classes of $H$-free chordal graphs.
{\em Discret. Appl. Math.} {\bf 2020}, {\em 281}, 81--95.
\url{https://doi.org/10.1016/j.dam.2019.04.029}.

\bibitem{PRD_CBIDCO}
Mann, K.; Fernau, H.
Perfect Roman Domination: Aspects of enumeration and parameterization.
\emph{Algorithms} \textbf{2024}, \emph{17}, 576.
\url{https://doi.org/10.3390/a17120576}.

\bibitem{PRD_BCPPISPBC}
Banerjee, S.; Keil, J.M.; Pradhan, D.
Perfect Roman domination in graphs.
\emph{Theoretical Computer Science} \textbf{2019}, \emph{796}, 1--21.
\url{https://doi.org/10.1016/j.tcs.2019.08.017}.

\bibitem{PID_CSPB}
Pradhan, D.; Banerjee, S.; Liu, J.-B.
Perfect Italian domination in graphs: Complexity and algorithms.
\emph{Discrete Applied Mathematics} \textbf{2022}, \emph{319}, 271--295.
\url{https://doi.org/10.1016/j.dam.2021.08.020}.

\bibitem{PRD_S}
Fernau, H.; Mann, K.
Perfect Roman domination and unique response Roman domination.
\emph{arXiv} \textbf{2023}, arXiv:2309.07018.

\bibitem{ID_SB}
Chen, H.; Lu, C.
Roman $\{2\}$-domination problem in graphs.
\emph{Discussiones Mathematicae Graph Theory} \textbf{2022}, \emph{42},
641--660.
\url{https://doi.org/10.7151/dmgt.2332}.

\bibitem{PID_DCCP45}
Paul, K.; Pandey, A.
Perfect Italian domination on some generalizations of cographs.
\emph{Computational and Applied Mathematics} \textbf{2024}, \emph{43}(6),
390.
\url{https://doi.org/10.1007/s40314-024-02901-5}.

\bibitem{lauri2020perfect}
Lauri, J.; Mitillos, C.
Perfect Italian domination on planar and regular graphs.
{\em Discret. Appl. Math.} {\bf 2020}, {\em 285}, 676--687.
\url{https://doi.org/10.1016/j.dam.2020.05.024}.

\bibitem{PID_Cograph}
Banerjee, S.; Henning, M.A.; Pradhan, D.
Perfect Italian domination in cographs.
\emph{Applied Mathematics and Computation} \textbf{2021}, \emph{391}, 125703.
\url{https://doi.org/10.1016/j.amc.2020.125703}.

\bibitem{ford1956maxflow}
Ford, L.R., Jr.; Fulkerson, D.R.
Maximal flow through a network.
\emph{Can. J. Math.} \textbf{1956}, \emph{8}, 399--404.
\url{https://doi.org/10.4153/CJM-1956-045-5}.

\bibitem{edmondskarp1972}
Edmonds, J.; Karp, R.M.
Theoretical improvements in algorithmic efficiency for network flow problems.
\emph{J. ACM} \textbf{1972}, \emph{19}, 248--264.
\url{https://doi.org/10.1145/321694.321699}.

\bibitem{gonzalez1985clustering}
Gonzalez, T.F.
Clustering to minimize the maximum intercluster distance.
{\em Theor. Comput. Sci.} {\bf 1985}, {\em 38}, 293--306.
\url{https://doi.org/10.1016/0304-3975(85)90224-5}.

\bibitem{kann1991}
Kann, V.
Maximum bounded 3-dimensional matching is MAX SNP-complete.
\emph{Inf. Process. Lett.} \textbf{1991}, \emph{37}, 27--35.
\url{https://doi.org/10.1016/0020-0190(91)90246-E}.
\bibitem{petrank1994gap}
Petrank, E.
The hardness of approximation: Gap location.
\emph{Computational Complexity} \textbf{1994}, \emph{4}, 133--157.
\url{https://doi.org/10.1007/BF01202286}.
\end{thebibliography}
\end{document}